\newcommand{\devnull}[1]{}
\newcommand{\scheduling}{\ensuremath{P|outtree,\allowbreak p_j=1|\sum wC}}
\newcommand{\defn}[1]{\textbf{\emph{#1}}}
\newcommand{\calP}{\mathcal{P}}
\newcommand{\calT}{\mathcal{T}}
\newcommand{\calC}{\mathcal{C}}
\newcommand{\calH}{\mathcal{H}}
\renewcommand{\epsilon}{\varepsilon}
\newcommand{\cost}{cost}
\newcommand{\costf}{cost^f}
\newcommand{\wods}{WODs\xspace}
\newcommand{\wod}{WOD\xspace}
\newcommand{\lsms}{LSM-trees\xspace}
\newcommand{\bet}{$B^{\varepsilon}$-tree\xspace}
\newcommand{\bets}{$B^{\varepsilon}$-trees\xspace}
\newcommand{\ios}{IOs\xspace}
\newcommand{\proc}[1]		{\ifmmode\mbox{\textsc{#1}}\else\textsc{#1}\fi}
\newtheorem{theorem}{Theorem}
\newtheorem{lemma}[theorem]{Lemma}
\newtheorem{observation}[theorem]{Observation}
\definecolor{niceblue}{rgb}{.392,.584,.929}
\definecolor{nicered}{rgb}{1,.375,.375}
\definecolor{nicegreen}{rgb}{.1,.8,.4}
\definecolor{nicepurple}{rgb}{.78,.26,.93}
\definecolor{niceorange}{rgb}{1,.575,.25}
\definecolor{niceyellow}{rgb}{.98,.94,.3}
\newcommand{\conferencename}[3]{
\ifx\longconferencenames\undefined
\newcommand{#1}[0]{{#2}}
\else
\newcommand{#1}[0]{{#3}}
\fi
}
\conferencename{\podc}{PODC}{Proceedings of the ACM symposium on Principles of
distributed computing (PODC)}
\conferencename{\asplos}{ASPLOS}{{Proceedings of the ACM International Conference on Architectural Support for Programming Languages and Operating Systems (ASPLOS)}}
\conferencename{\mascots}{MASCOTS}{{Proceedings of the IEEE International Symposium on Modeling, Analysis and Simulation of Computer and Telecommunications Systems (MASCOTS)}}
\conferencename{\spaa}{SPAA}{{Proceedings of the ACM symposium on Parallelism in algorithms and architectures (SPAA)}}
\conferencename{\osdi}{OSDI}{{Proceedings of the USENIX Symposium on Operating Systems Design and Implementation (OSDI)}}
\conferencename{\disc}{DISC}{{Proceedings of the International Conference on Distributed Computing (DISC)}}
\conferencename{\usenixatc}{USENIX ATC}{{Proceedings of the USENIX Annual Technical Conference}}
\conferencename{\usenixsec}{USENIX Security}{{Proceedings of the USENIX Security Symposium}}
\conferencename{\pldi}{PLDI}{{Proceedings of the ACM SIGPLAN conference on Programming language design and implementation (PLDI)}}
\conferencename{\computer}{Computer}{{IEEE Computer}}
\conferencename{\sosp}{SOSP}{{Proceedings of the ACM SIGOPS Symposium on Operating Systems Principles (SOSP)}}
\conferencename{\isca}{ISCA}{{Proceedings of the ACM IEEE International Symposium on Computer Architecture (ISCA)}}
\conferencename{\csaw}{CSAW}{{Proceedings of the ACM Workshop on Computer Security Architecture (CSAW)}}
\conferencename{\wddd}{WDDD}{{Proceedings of the Workshop on Duplicating, Deconstructing, and Debunking (WDDD)}}
\conferencename{\vldb}{VLDB}{{Proceedings of the International Conference on Very Large Databases (VLDB)}}
\conferencename{\toplas}{TOPLAS}{{ACM Transactions on Programming Languages and Systems (TOPLAS)}}
\conferencename{\tocs}{ACM TOCS}{{ACM Transactions on Computer Systems (TOCS)}}
\conferencename{\tos}{ACM TOS}{{ACM Transactions on Storage (TOS)}}
\conferencename{\ppopp}{{PPoPP}}{{Proceedings of the ACM SIGPLAN Symposium on Principles and Practice of Parallel Programming (PPoPP)}}
\conferencename{\jpdc}{J. Parallel Distrib. Comput.}{{Journal of Parallel and Distributed Computing}}
\conferencename{\ismm}{ISMM}{{Proceedings of the ACM International Symposium on Memory Management (ISMM)}}
\conferencename{\cacm}{CACM}{{Communications of the ACM (CACM)}}
\conferencename{\hpca}{HPCA}{{Proceedings of the IEEE International Symposium on High-Performance Computer Architecture (HPCA)}}
\conferencename{\transact}{TRANSACT}{{Proceedings of the ACM SIGPLAN Workshop on Transactional Computing (TRANSACT)}}
\conferencename{\iiswc}{IISWC}{{Proceedings of the IEEE International Symposium on Workload Characterization (IISWC)}}
\conferencename{\tpds}{IEEE Trans, Parallel Distrib. Syst.}{{IEEE Transactions on Parallel and Distributed Systems}}
\conferencename{\osr}{OSR}{{ACM Operating Systems Review}}
\conferencename{\nsdi}{NSDI}{{Proceedings of the USENIX Symposium on Networked Systems Design and Implementation (NSDI)}}
\conferencename{\cc}{CC}{{Proceedings of the International Conference on Compiler Construction (CC)}}
\conferencename{\surveys}{ACM Comput. Surv.}{{ACM Computing Surveys}}
\conferencename{\icde}{ICDE}{{Proceedings of the IEEE International Conference on Data Engineering (ICDE)}}
\conferencename{\fast}{FAST}{{Proceedings of the USENIX Conference on File and Storage Technologies (FAST)}}
\conferencename{\eurosys}{{E}uro{S}ys}{{Proceedings of the ACM European Conference on Computer Systems ({E}uro{S}ys)}}
\conferencename{\hotos}{HotOS}{{Proceedings of the USENIX Workshop on Hot Topics in Operating Systems (HotOS)}}
\conferencename{\hotcloud}{HotCloud}{{Proceedings of the USENIX Workshop on Hot Topics in Cloud Computing (HotCloud)}}
\conferencename{\oopsla}{OOPSLA}{{Proceedings of the ACM SIGPLAN Conference on Object-Oriented Programming, Systems, Languages, and Applications (OOPSLA)}}
\conferencename{\ndss}{NDSS}{{Proceedings of the Network and Distributed System Security Symposium (NDSS)}}
\conferencename{\oakland}{IEEE S\&P}{{Proceedings of the IEEE Symposium on Security and Privacy (Oakland)}}
\conferencename{\ispass}{ISPASS}{Proceedings of the IEEE International Symposium on Performance Analysis of Systems and Software (ISPASS)}
\conferencename{\europar}{{E}uro{P}ar}{{Proceedings of the European Conference on Parallel Programming ({E}uro{P}ar)}}
\conferencename{\sigcse}{{SIGCSE}}{{Proceedings of the ACM SIGCSE technical symposium on Computer science education (SIGCSE)}}
\conferencename{\ccs}{{CCS}}{{Proceedings of the ACM Conference on Computer and Communications Security (CCS)}}
\conferencename{\veeconf}{{VEE}}{{Proceedings of the International Conference on Virtual Execution Environments (VEE)}}
\conferencename{\lisa}{{LISA}}{{Proceedings of the Large Installation System Administration Conference (LISA)}}
\conferencename{\scool}{SCOOL}{{Proceedings of the Workshop on Synchronization and Concurrency in Object-Oriented Languages (SCOOL)}}
\conferencename{\cgo}{CGO}{{Proceedings of the International Symposium on Code Generation and Optimization (CGO)}}
\conferencename{\dsn}{{DSN}}{Proceedings of the International Conference on Dependable Systems and Networks (DSN)}
\conferencename{\sac}{{SAC}}{{Proceedings of the ACM Symposium on Applied Computing (SAC)}}
\conferencename{\cluster}{{IEEE Cluster}}{{IEEE International Conference on Cluster Computing}}
\conferencename{\soda}{{SODA}}{{ACM-SIAM Symposium on Discrete Algorithms (SODA)}}
\conferencename{\hotstorage}{{HotStorage}}{{Proceedings of the USENIX Conference on Hot Topics in Storage and File Systems (HotStorage)}}
\conferencename{\hotdep}{{HotDep}}{{Proceedings of the USENIX Conference on Hot Topics in System Dependability (HotDep)}}
\conferencename{\cikm}{{CIKM}}{{Proceedings of the ACM International Conference on Information and Knowledge Management (CIKM)}}
\conferencename{\sigmod}{{SIGMOD}}{{Proceedings of the ACM SIGMOD International Conference on Management of Data (SIGMOD)}}
\conferencename{\sigmetrics}{{SIGMETRICS}}{{Proceedings of the ACM SIGMETRICS International Conference on Measurement and Modeling of Computer Systems (SIGMETRICS)}}
\conferencename{\supercomputing}{{SC}}{{Proceedings of the ACM/IEEE Conference on Supercomputing (SC)}}
\begin{document}
\title{Root-to-Leaf Scheduling in Write-Optimized Trees}

\author[1]{Christopher Chung}
\author[2]{William Jannen}
\author[2]{Samuel McCauley}
\author[3]{Bertrand Simon}
\affil[1]{Independent Researcher, \texttt{cchung3020@gmail.com}}
	\affil[2]{Williams College, \texttt{\{jannen,sam\}@cs.williams.edu}}
	\affil[3]{IN2P3 Computing Center/CNRS, \texttt{bertrand.simon@cc.in2p3.fr}}

\date{}



  

\maketitle

\begin{abstract}
	In a large, parallel dictionary, performance is dominated by the cache efficiency of its database operations.
	Cache efficiency is analyzed algorithmically in
the disk access machine (DAM) model, in which any $P$ disjoint sets of $B$ contiguous elements can be moved simultaneously for unit cost.

  Write-optimized dictionaries (\wods) are a class
  of cache-efficient data structures
  that buffer updates and apply them in batches to optimize the amortized update cost in the DAM model.
  As a concrete example of a \wod, consider a \bet.
  \bets encode updates as messages that are
  inserted at the root of the tree.

  Messages are applied lazily at leaves:
  \bets only move---or ``flush''---messages when close to $B$ of them can be written simultaneously, optimizing the amount of work done per DAM model cost.
  Thus, recently-inserted messages reside at or near the root
  and are only flushed down the tree after a sufficient number of new messages arrive.  Other \wods{} operate similarly.

  Although this lazy approach works well for many operations,
  some types of updates do not complete until the update message reaches a leaf.
For example, both deferred queries and secure deletes must flush through all nodes along their root-to-leaf path before the operation takes effect.

What happens when 
we want to service a large number of (say) secure deletes as quickly as possible?
 Classic techniques leave us with an unsavory choice. 
 On the one hand, we can group the delete messages using a write-optimized approach and move them down the tree lazily. But, then many individual deletes may be left incomplete for an extended period of time, as their messages sit high in the tree waiting to be grouped with a sufficiently large number of related messages.
 On the other hand, we can ignore cache efficiency and perform a root-to-leaf flush for each delete. This begins work on individual deletes immediately, but performs little work per unit cost, harming system throughput.
We propose a middle ground to this all-or-nothing tradeoff.

This paper investigates a new framework for efficiently flushing collections of messages 
from the root 
to their leaves 
in a write-optimized data structure.
 We model each message as completing when it has been flushed to its leaf;  our goal is to minimize the average completion time in the DAM model.  
 We give an algorithm that $O(1)$-approximates the optimal average completion time in this model (whereas achieving the optimal solution is NP-hard).
 Along the way, we give a new $O(1)$-approximation algorithm for a classic scheduling problem: scheduling parallel tasks for weighted completion time with tree precedence constraints.
\end{abstract}

\section{Introduction}
\label{sec:introduction}

In large, parallel dictionaries,
performance hinges on the cache efficiency of database operations.
Therefore, data structures such as $B$-trees that achieve strong guarantees on their cache efficiency have become ubiquitous in modern databases.

The Disk Access Machine (DAM) model~\cite{dam} is a powerful tool for theoretically describing the cache efficiency of a data structure. In the DAM model there are three machine-dependent parameters: $B$, the size of a cache line; $P$, the number of disk accesses that can be performed in parallel; and $M \gg PB$, the size of the cache.
 In a single \defn{IO}, the data structure can read or write up to $P$ sets of $B$ contiguous elements to or from cache.  The cost of any process under the DAM model is defined as the number of \ios required to complete the process.  

Recently, write-optimized data structures have revolutionized the theory of cache-efficient data structures~\cite{BrodalFagerberg03,lsm,dayan2018optimal,bender19spaa,pandey2020timely,bender2017write,BenderDaFa20}.
A \defn{write-optimized} data structure handles inserts lazily: rather than traversing the data structure to put newly-inserted items directly in place, up to $B$ items are accumulated and moved through the data structure in large \ios.  Write-optimized data structures greatly improve insert performance in the DAM model.  For example, if the cache size $B$ is much larger than the height of the data structure, a write-optimized data structure requires $o(1)$ \ios per insertion. This improvement in write cost comes at no asymptotic cost to query time.

This performance improvement has also been realized in practice.
\lsms, \bets, and their variants
are popular \wods
that have been
used to implement file systems~\cite{jannen15fast,jannen15tos,yuan16fast,jiao22eurosys,shetty13fast,ren13atc,esmet12hotstorage},
key-value stores~\cite{LevelDB,RocksDB,lu16fast,raju17sosp},
and other applications where cache efficiency is critical.

\paragraph{\bets}

In this paper, our model and algorithmic results are specifically tailored to flushes in the \bet, a classic \wod originally presented in~\cite{BrodalFagerberg03}.
 Nonetheless, it is likely that similar strategies to those presented here would apply to other WODs, such as LSM-trees.
For example, see the discussion in~\cite{BenderDaFa20} of the similarities between LSM compaction strategies and \bet flushing policies.

We now briefly describe the structure of \bets as well as \bet insert and query operations. See e.g.~\cite{BrodalFagerberg03,bender15login} for more detailed descriptions.

\bets, like $B$-trees, are trees of nodes of size $B$.
All \bet leaves are at the same height, all \bet keys are stored in binary search tree order, and every \bet node has $\Theta(B^{\epsilon})$ children.
\bet nodes maintain internal state so that these invariants can be maintained efficiently.

Each \bet node has a \defn{buffer}
in which messages can be stored.  (Thus, a node can be read or written in a single IO.)
We ignore the internal state of a node: we assume the tree is static and that we always know the leaf where any key should be stored.  
Thus we treat each node as consisting solely of a buffer of size exactly $B$.

To insert an item into a \bet, place the item in the root node's buffer.  Then, recursively flush as follows: if the buffer of a node $v$ is full, go through all $B$ items in the buffer and determine which child they would be flushed to next.  The child with the most such items is flushed to, moving every possible item to that child; recurse if the child now has $B$ or more items.

To query an item, traverse the root-to-leaf path $p$ of the tree using the binary search tree ordering.  The binary search tree ordering guarantees that the queried item is stored along this path.  For each node $v$ in $p$, determine if any of the $B$ items in the buffer of $v$ are the queried item.

\paragraph{Parallelism}
Modern large-scale systems can often access several disjoint cache lines in parallel.  This directly models the behavior of SSDs~\cite{chen16tos,bender21topc}.  These parallel accesses have been implemented for flushes in practical WODs-based databases such as RocksDB~\cite{RocksDBcompaction}.

Following the DAM model~\cite{dam}, $P$ denotes the number of accesses that can be performed in parallel.  
Thus, up to $P$ disjoint flushes can be performed in a single IO\@.
Generally, $P$ is a small constant on real-world systems; however, we do not assume that $P$ is a constant in our results (i.e.\ our $O(1)$-approximation holds for any $P$).

\paragraph{Upserts and Write-Optimization}
Crucially, WOD performance features read-write asymmetry: write-optimization drastically speeds insertions, whereas queries cannot be write-optimized and are therefore much more expensive.  This asymmetry follows from a distinction in what each operation requires.  Insertions are flexible: the only requirement for newly inserted items is that they are findable by subsequent queries.  Queries, on the other hand, must immediately and correctly return.  We therefore cannot amortize standard query costs.

This observation leads to the question: what operations can we write optimize, i.e., encode the operations as data that is lazily flushed down the tree?  
Such operations are called \defn{upserts}.
A classic example is deletes: rather than finding and removing the element, we can insert a ``tombstone'' message that is slowly flushed down the tree, only truly removing the item when the tombstone meets the actual element~\cite{bender15login}.  This approach does not affect query correctness so long as queries search for tombstone elements---a query that finds a tombstone acts as if the element was already deleted.  With this improvement, deletes can enjoy write-optimized-insert performance.  Recent work has given similar results for a class of queries that can be deferred~\cite{derange}; by encoding a query as a deferred query---or ``derange query''---message, a WOD can delay answering the query until the time that the derange query message meets the data of interest.  

\paragraph{Flushing a Root-to-Leaf Path}
This paper addresses a broad  issue that arises when using upserts to achieve write-optimized performance: \emph{some \wod{} operations may remain incomplete until an entire root-to-leaf path has been flushed}.  We call these \defn{root-to-leaf operations}.

As already discussed, deferred queries fall under this category: we cannot answer a deferred query for an item until we have checked every buffer along its path.
However, subtler issues may also arise.
Consider deletion.
As explained above,
\bet{} items are deleted by inserting a tombstone to 
signal that the item is no longer valid.
However, many applications demand a \defn{secure delete},
where even an adversary who is given access to the system is unable to recover an item~\cite{reardon13SSP}.
Tombstone messages do not realize secure deletion. To see why, consider an adversary that is able to examine the data structure's contents offline; although a tombstone renders an item \emph{logically} invalid, the item remains \emph{physically} present until it is removed from the tree.
To realize secure deletion,
a \bet{} must therefore flush a tombstone along an entire root-to-leaf path,
purging the physical data from the leaf.

\paragraph{A New Kind of Latency}
Root-to-leaf operations motivate a new latency consideration for \wods.  Traditionally, the latency for \bet operations fall into two categories: read operations (e.g., queries) must be answered immediately via multiple \ios, whereas write operations (e.g., inserts, upserts) benefit from write-optimized batching and the  flexibility to delay flushes indefinitely.  

This leaves open the question: what happens when there is a backlog of root-to-leaf operations?  For example, a large number of deferred queries may have nearby deadlines.  Or, a firm may want to perform nightly purges of sensitive information by querying for outdated data and then secure deleting it---thus producing a large batch of root-to-leaf operations at the end of each working day.

When a large number of root-to-leaf operations arrive simultaneously, we want to complete the operations as quickly as possible.  These circumstances provide a new (and, we will see, structurally rich) set of considerations for how to flush items in a \bet.  There is some notion of write-optimization: since many operations are being completed simultaneously, we can group similar operations together to improve the amount of work done per IO\@.  But there is also a notion of latency: we must complete operations as quickly as possible, which may entail performing less-efficient flushes lower in the tree before performing flushes higher in the tree.  Known flushing techniques do not suffice: handling operations one by one (as traditionally done for queries) leads to pessimal throughput, whereas greedily grouping operations (as traditionally done for inserts/upserts) leads to terrible latency.

This poses a scheduling problem.  We can view the running of the data structure as proceeding in a sequence of \defn{time steps}, with 1 IO---and therefore $P$ flushes---per time step.  The goal is to schedule flushes to complete all operations as quickly as possible.

We emphasize that in many applications, it is crucial to minimize the average (as opposed to maximum) completion time.  For example, this maximizes the throughput of the \bet if it is experiencing a temporary backlog.  Alternatively, it gives the strongest security guarantee possible for secure deletes: as many items are deleted as possible if the data structure is compromised before all deletes could be completed.

In this paper we give an algorithm to complete an offline set of root-to-leaf operations as quickly as possible.  We model these operations as \defn{messages}: each message must be flushed to some leaf in the tree, and the message does not complete until it reaches that leaf.  The goal is to minimize the average completion time (since the number of messages is fixed, we equivalently minimize the total completion time).

\subsection{Results}
\label{sec:results}

We give an $O(1)$-approximation algorithm for the problem of scheduling write optimized tree flushes so that messages reach the leaf as quickly as possible on average.  We show that this problem is NP-hard, so a constant approximation factor is the best we can hope for in the worst case.

Our solution works by reducing this problem to a classic scheduling problem:\footnote{We formally define this scheduling problem below.  In short, the goal is to schedule tasks on $P$ machines to minimize total weighted completion time. All tasks can be completed in one time step.  The tasks have precedence constraints, where the graph of precedence constraints must be an out-directed forest.} $\scheduling$.  We give a $4$-approximation algorithm for this scheduling problem.  This is not as good as the best known approximation factor of approximately $2.41$ by Li~\cite{li2020scheduling}, but our algorithm is simpler.

\subsection{Related Work}

As mentioned above, previous work on write-optimizing non-insert operation has focused on upsert operations like deferred range queries~\cite{derange} and deletes~\cite{bender15login}.  

A wider variety of work has focused on improved flushing methods for \bets and compaction strategies for LSMs.  Indeed, much past work has benefited practical implementations (i.e.~\cite{RocksDBcompaction}) in addition to published research~\cite{dostoevsky,raju17sosp,geardb,liu23hpdc}.

A long line of work has focused on engineering traditional (not write-optimized) external-memory trees to improve performance on access sequences that are fully or partially known in advance; for example~\cite{lustiber2017tree,bent1985biased,gila2023zip,achakeev2013efficient,silberstein2008efficient,kraska2018case,faloutsos1992b,cao2023learningaugmented}.   This line of work generally focuses on changing the shape of the tree and/or updating pivots rather than efficiently moving elements down a static tree.

A long line of scheduling work has examined problems similar to $\scheduling$.  
A classic result of Horn gives an optimal algorithm for $1|outtree|\sum wC$~\cite{horn1972single}.  Meanwhile, $P|outtree|\sum wC$ is known to be NP-hard~\cite{lenstra1980complexity}.
A line of work on $P|prec,p_j = 1|\sum wC$~\cite{2approxschedule,munier1998approximation} lead to the best-known approximation ratio of $1 + \sqrt{2}$ by Li~\cite{li2020scheduling}; this remains the best known approximation ratio for $\scheduling$.
Other work has focused on practical heuristics for parallel scheduling with precedence constraints; see e.g.~\cite{kwok1999static} for a survey.

Recently, Jager and Warode gave a simple algorithm for $P|prec,\allowbreak pmtn|\sum wC$ independently of this work~\cite{JagerWarode24}.  Their algorithm is more general (it works for arbitrary rather than outtree precedence constraints, and for preemption rather than assuming $p_j = 1$) and achieves a better approximation ratio.  However, while this result is simpler and faster than the state-of-the-art LP-rounding-based method in~\cite{li2020scheduling}, it uses parametric flows, resulting in $O(n^4)$ running time compared to our simple greedy $O(n\log n)$ result.  We believe both results have independent merit.

\section{Preliminaries}
\label{sec:preliminaries}

\subsection{Problem Definition}
\label{sec:problem_definition}

An instance of the problem consists of a (static) tree $T$, a set of messages $M$, and two DAM model parameters $P$ and $B$.\footnote{The classic DAM model has an additional parameter called $M$ denoting the size of cache.  We ignore this parameter as it does not affect our results; we only use $M$ to denote messages.}
We give our running times in terms of $n = |M| + |T|$.

For each node $v\in T$, we denote the number of edges on the path between $v$ and the root as $h(v)$; we call this the \defn{height} of $v$.  
We denote the height of $T$ as $h = \max h(v)$.
We say that a node $v'$ is a descendant of $v$ if $v' = v$ or $v'$ is the child of a descendant of $v$.  An edge $e = (v_1, v_2)$ is a descendant of $v$ if $v_1$ and $v_2$ are descendants of $v$.

Each message $m$ has a \defn{target leaf}  representing the final leaf where $m$ must be sent.\footnote{We assume that all messages are being sent to leaves for simplicity; our techniques likely extend to handle messages with internal targets.}  
We assume that all leaves in the tree are at height $h$.\footnote{Our results generalize immediately so long as the average height of a target leaf is $\Omega(h)$.}

We assume that all parameters remain unchanged through the execution of the algorithm: in particular, the structure of $T$ is not changed due to rebalances, and the target leaf of each message remains the same.  Since the structure of a \bet{} only changes on inserts, we can alternatively assume that there are no inserts while messages are being flushed (or that rebalances caused by inserts are delayed until after flushes are completed).  We leave to future work to find an algorithm that can handle inserts interleaved with message flushes.

\paragraph{Flushes}
A schedule $S$ consists of a sequence of flushes, with at most $P$ simultaneous flushes per \defn{time step}.

A \defn{flush} $f$ is defined by an edge $(v, v')\in T$ where $v'$ is a child of $v$, and a set of at most $B$ messages $M_f$.\footnote{Each flush has at most $B$ messages; the parameter $B$ additionally constrains the number of messages stored in each node below.}  We also use flush as a verb: the schedule flushes the messages in $M_f$ from $v$ to $v'$.
No message may be in two different flushes in the same time step.
Thus, a flush in a schedule $S$ at time step $t$ is a \defn{valid flush} if all messages in $M_f$ are in $v$ at $t$, and every other flush $f'$ at $t$ has $M_f \cap M_{f'} = \emptyset$.
All messages in $M_f$ are moved to $v'$ after the flush ends---thus, they are in $v'$ at time step $t+1$.

\paragraph{Node sizes}

The nodes in a write-optimized tree have a size limit:
any internal non-root node $v\in T$ can contain up to $B$ messages at any time. 

However, write-optimized data structures do not obey a strict size limit when there are \defn{cascades}: a large number of messages flushed down a path toward the same node.  Consider, for example, two internal nodes, $v_1$ and $v_2$, and a leaf $\ell$, where $v_1$ is the parent of $v_2$ and $v_2$ is the parent of $\ell$.  One may want to take all messages in $v_1$ with target leaf $\ell$ and flush them to $v_2$; then, flush all messages in $v_2$ with target leaf $\ell$ to $\ell$ (see Figure~\ref{fig:cascades}).  These cascades are fundamental to write optimized tree flushing; for example, the work of Bender et al.~\cite{BenderDaFa20} is dedicated entirely to limiting the size of these cascades.  

Notice, however, that  a cascade may cause $v_2$ to \emph{temporarily} overflow in this example if $v_2$ already contained close to $B$ messages.  Because the overflow is temporary, tree implementations permit cascades in practice (overflowing messages can be handled with a small amount of scratch space).

\tikzset{three sided/.style={
        draw=none,
        append after command={
            [shorten <= -0.5\pgflinewidth]
            ([shift={(-0.0\pgflinewidth,+0.3\pgflinewidth)}]\tikzlastnode.north east)
			edge[dotted,thick]([shift={( 0.0\pgflinewidth,+0.3\pgflinewidth)}]\tikzlastnode.north west) 
            (\tikzlastnode.north east)
			edge[dotted,thick](\tikzlastnode.south east)            
            ([shift={( 0.0\pgflinewidth,-0.3\pgflinewidth)}]\tikzlastnode.south west)
			edge[dotted,thick]([shift={(0.0\pgflinewidth,-0.3\pgflinewidth)}]\tikzlastnode.south east)
        }
    }
}

\definecolor{niceblue}{rgb}{.392,.584,.929}

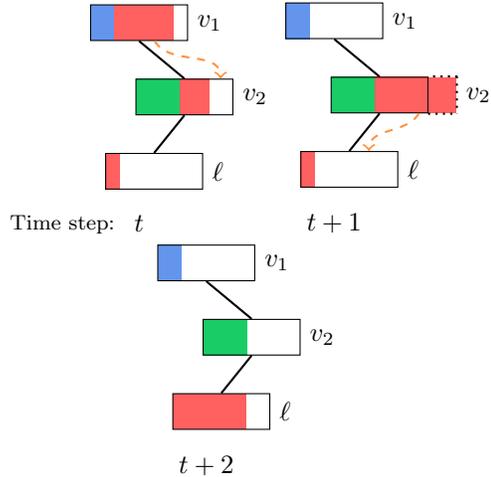
\begin{figure}[ht]
	\begin{center}
	\begin{tikzpicture}
		\node[draw,label=right:$v_1$] 
			[path picture={ 
			\fill[niceblue](root.south west)rectangle ($(root.north west)!.25!(root.north east)$); 
			\fill[nicered] ($(root.south west)!.25!(root.south east)$) rectangle ($(root.north west)!.85!(root.north east)$);
		}]
				(root){\phantom{AAAA}};
    \node[draw,below right = .5cm and -.7cm of root,label=right:$v_2$] 
			[path picture={ 
			\fill[nicegreen](child.south west)rectangle ($(child.north west)!.45!(child.north east)$); 
			\fill[nicered] ($(child.south west)!.45!(child.south east)$) rectangle ($(child.north west)!.75!(child.north east)$);
		}]
				(child){\phantom{AAAA}};
    \node[draw,below left = .5cm and -.9cm of child,label=right:$\ell$] [path picture={ \fill[nicered](leaf.south west)rectangle ($(leaf.north west)!.15!(leaf.north east)$); }](leaf){\phantom{AAAA}};
	\node[below = 2.2cm of root](t){$t$};
	\node[left=.01cm of t]{\footnotesize Time step:};
	\path[draw,thick] (root.south) -- (child.north);
	\path[draw,thick] (child.south) -- (leaf.north);
	\draw[thick,dashed,->,out=300,in=90,niceorange] 
		($(root.south west)!.66!(root.south east)$) 
		to  
		($(child.north west)!.87!(child.north east)$);
	\end{tikzpicture}
	\begin{tikzpicture}
		\node[draw,label=right:$v_1$] 
			[path picture={ 
			\fill[niceblue](root.south west)rectangle ($(root.north west)!.25!(root.north east)$); 
		}]
				(root){\phantom{AAAA}};
    \node[draw,below right = .5cm and -.7cm of root] 
			[path picture={ 
			\fill[nicegreen](child.south west)rectangle ($(child.north west)!.45!(child.north east)$); 
			\fill[nicered] ($(child.south west)!.45!(child.south east)$) rectangle (child.north east);
		}]
				(child){\phantom{AAAA}};
			\node[draw=black,fill=nicered,right=-.00cm of child,label=right:$v_2$,three sided]{\phantom{I}};
    \node[draw,below left = .5cm and -.9cm of child,label=right:$\ell$] [path picture={ \fill[nicered](leaf.south west)rectangle ($(leaf.north west)!.15!(leaf.north east)$); }](leaf){\phantom{AAAA}};
	\node[below = 2.2cm of root]{$t+1$};
	\path[draw,thick] (root.south) -- (child.north);
	\path[draw,thick] (child.south) -- (leaf.north);
	\draw[thick,dashed,->,out=240,in=90,niceorange] 
		($(child.south west)!.90!(child.south east)$) 
		to  
		($(leaf.north west)!.70!(leaf.north east)$);
	\end{tikzpicture}
	\begin{tikzpicture}
		\node[draw,label=right:$v_1$] 
			[path picture={ 
			\fill[niceblue](root.south west)rectangle ($(root.north west)!.25!(root.north east)$); 
		}]
				(root){\phantom{AAAA}};
    \node[draw,below right = .5cm and -.7cm of root,label=right:$v_2$] 
			[path picture={ 
			\fill[nicegreen](child.south west)rectangle ($(child.north west)!.45!(child.north east)$); 
		}]
				(child){\phantom{AAAA}};
    \node[draw,below left = .5cm and -.9cm of child,label=right:$\ell$] [path picture={ \fill[nicered](leaf.south west)rectangle ($(leaf.north west)!.75!(leaf.north east)$); }](leaf){\phantom{AAAA}};
	\node[below = 2.2cm of root]{$t+2$};
	\path[draw,thick] (root.south) -- (child.north);
	\path[draw,thick] (child.south) -- (leaf.north);
	\end{tikzpicture}
\end{center}
	\caption{This figure shows three successive time steps during a cascade of three nodes $v_1$, $v_2$, and $\ell$.  The flush that will occur in the next time step is shown with an orange dotted line.  Messages that have $\ell$ as a target leaf are represented in red; all others are represented in green or blue.  $v_2$ temporarily overflows on the second time step, allowing all messages to be flushed in two time steps.}
	\label{fig:cascades}
\end{figure}
To allow for schedules that include cascades, nodes may \defn{overflow}: a node may contain more than $B$ messages, so long as at most $B$ of them remain in the node at the next time step.  
Allowing overflows this way reflects the usage of write optimized trees with cascades: only $B$ messages can be written to a node, but a cascade of flushes to a common location can take place even if the cascade causes nodes to temporarily have size more than $B$.
We emphasize that the limit that only $B$ messages can participate in a flush remains a strict limit.

Thus, we formally define the \defn{space requirement} of each node as follows.  For each internal, non-root node $v\in T$ and each time step $t$, 
there can be at most $B$ messages that are in $v$ at both time steps $t$ and $t+1$.  
We allow the root and the leaves to hold an unlimited number of messages.

\paragraph{Valid Schedules and Cost}

We say that a schedule $S$ is \defn{valid} if all flushes are valid, all messages are flushed to their target leaf during $S$, and all nodes satisfy the space requirement.

A schedule $S$ is \defn{overfilling} if all flushes are valid and all messages are flushed to their target leaf during $S$.  
An overfilling schedule may have nodes that do not satisfy the space requirement.
An overfilling schedule is not a solution to our problem,
but building an overfilling schedule is a useful intermediate step in our results.

Let $c(S, m)$ denote the completion time of any message $m$ (i.e.\ the time when $m$ reaches its target leaf) in a schedule $S$.
We also say that $m$ \defn{finishes} at $c(S,m)$ in $S$.
The cost of a schedule is the total completion time: ${c(S) = \sum_{m\in M}  c(S, m)}$.

The goal of this model is to, for any $(T, M, P, B)$, find the valid schedule that minimizes the total completion time. 
Throughout the paper we call this model \defn{write-optimized root-to-leaf message scheduling}, or \defn{WORMS}.

\subsection{Scheduling with Tree Precedence Constraints}%
\label{sec:scheduling_with_tree_precedence_constraints}

Our algorithm works by reducing WORMS to a classic scheduling problem, $\scheduling$, which we define and discuss here.

An instance of $\scheduling$ consists of a set of \defn{tasks}.  Each task $t$ has processing time $1$ (i.e.\ $p_j = 1)$; therefore, a task can be completed by assigning it to one machine for one time step.  
Each task $j$ has a weight $w(j)$. 
The objective $\sum wC$ is to minimize the total weighted completion time of all tasks: if task $j$ completes at time $c(\sigma, j)$ under any schedule $\sigma$,  
the cost of a schedule $\sigma$ is $\cost(\sigma) = \sum_j c(\sigma,j) w(j)$.
The goal is to find the schedule $\sigma_O$ minimizing $\cost(\sigma_O)$.

The $P$ in $\scheduling$ represents that there are $P$ machines, and therefore up to $P$ tasks can be completed at each time step. 
(This is a slight abuse of notation as $P$ also represents the number of parallel flushes in the WORMS model.  However, $P$ has the same value in both: a WORMS instance with parameter $P$ results in a $\scheduling$ instance with $P$ machines.)

The $outtree$ in $P|outtree, p_j = 1| \sum wC$ represents that the tasks have \defn{outtree precedence constraints}.  For each task $j$, there is at most one task $j'$ such that $j$ cannot be processed until $j'$ is completed; thus $j'$ must be scheduled in a time step strictly before $j$ in any schedule.    There cannot be cycles in these constraints (as no task in a cycle can ever be completed). 
The limitations that there is at most one $j'$, and no cycles, are why these are ``outtree'' precedence constraints.  We refer to one task as a parent/child/descendant (etc.) of another in the natural way according to the precedence constraints in this tree.

\section{Using $\scheduling$ to Solve WORMS}%
\label{sec:structure}

In this section we reduce WORMS to a classic scheduling problem: $\scheduling$.  We proceed in two steps.  
First, we show how to convert any overfilling schedule to a valid schedule while losing only a constant factor in cost.
With this result in hand, we show that every overfilling schedule is reducible to to an instance of $\scheduling$.

\subsection{Converting Overfilling Schedules to Valid Schedules}
\label{sec:converting_overfilling_schedules_to_valid_schedules}

In this section we show that overfilling schedules can be converted to valid schedules while only increasing the cost by a constant factor.
\begin{lemma}
\label{lem:overfilling_to_valid}
There is a constant $c_1$ such that
for any overfilling schedule $S$ for a WORMS instance $(T,M,P,B)$, we can in $O(n\log n)$ time give a valid schedule $\hat{S}$ for WORMS satisfying
$
	\cost(\hat{S}) \leq c_1\cdot \cost(S)
	$.
\end{lemma}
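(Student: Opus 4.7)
My plan is to construct $\hat{S}$ from $S$ by a ``slowdown and re-timing'' procedure. The key observation is that the only obstruction to $S$ being a valid schedule is the space requirement at internal non-root nodes: the number of messages persistent at such a node between consecutive time steps may exceed $B$ in $S$. All other constraints (valid flushes, messages reaching their target leaves, $P$-parallelism per time step, $B$-per-flush) already hold. The construction should eliminate overflow while paying at most a constant factor in completion time. Throughout, the unlimited capacity of the root and leaves serve as ``absorbing'' buffers for overflow.

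First, I would analyze the overflow structure of $S$. For each internal non-root node $v$ and each time step $t$, let $X_S(v,t)$ denote the set of messages in $v$ at both $t$ and $t+1$, and call the excess $\max(0,|X_S(v,t)|-B)$ the \emph{overflow} at $(v,t)$. The total overflow contribution---summed across all nodes and times---can be bounded in terms of $\cost(S)$, since each lingering message $m$ at $v$ at time $t$ contributes at least one unit to $c(S,m)$ (the message has not yet reached its leaf).

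Second, I would build $\hat{S}$ by re-timing each message's descent. The central idea is to \emph{delay} each message's traversal: a message $m$ should be retained at as high an ancestor as possible (ultimately at the root), and begin its descent only ``just in time'' to reach its target leaf by time $c_1 \cdot c(S,m)$. The schedule is built greedily: sort messages by $c(S,m)$ in $O(n\log n)$ time, then assign each message a descent trajectory that respects the $B$-persistent-message budget at each internal node on its path. Using a priority queue to track node budgets and pending flushes gives the $O(n\log n)$ running time.

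The main obstacle is the congestion/capacity analysis establishing that a constant $c_1$ suffices. A naïve approach---cascade each message straight from root to leaf over $h$ consecutive time steps, which trivially satisfies the space requirement because each message is then non-persistent at every internal node---can fail because overlapping cascades create demand on the $P$-parallelism budget that requires $c_1 = \Omega(h)$ to absorb. The correct argument must instead preserve the batching efficiency of $S$: whenever $S$ flushes a group of $B$ messages together across an edge, $\hat{S}$ should be able to do the same, in a nearby time step. I would resolve this by a careful amortized charging scheme: each unit of delay a re-timed flush incurs in $\hat{S}$ is charged to a corresponding ``witness'' flush in $S$ (either a flush that already happened, contributing to $c(S,m)$, or a space-budget conflict at a specific node that is bounded by the overflow analysis). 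Combining the overflow bound from the first step with this charging gives $\cost(\hat{S}) \le c_1 \cdot \cost(S)$ for a universal constant $c_1$.
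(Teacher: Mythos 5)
Your proposal correctly isolates the difficulty (only the space requirement at internal nodes can fail, and naive per-message root-to-leaf cascading blows up the parallelism budget by a factor of $h$), but the resolution is left at the level of intent: ``a careful amortized charging scheme'' with ``witness flushes'' is named rather than constructed, and that is precisely where the entire content of the lemma lives. The tension you identify is real and sharp: to preserve the batching efficiency of $S$ you must flush $\Theta(B)$ messages together, but messages batched together generally have different target leaves, so they must diverge somewhere in the tree --- and wherever they diverge they sit and wait, recreating the space violation you set out to remove. Your ``hold at the root and descend just in time'' idea does not by itself resolve this, because a message cannot descend non-stop to its leaf in a batch unless all $\Theta(B)$ batched messages share the whole root-to-leaf path.

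The paper's proof resolves this with a structural decomposition your proposal is missing: it identifies, bottom-up, the \emph{packed nodes} --- the deepest nodes $v$ with at least $B/6$ not-yet-claimed messages targeting descendants of $v$ --- and partitions each packed node's messages into \emph{packed sets} of size between $B/6$ and $B/2$. Above its packed parent, a packed set is cascaded from the root as a single non-persistent batch of $\Theta(B)$ messages (schedule $U$), and the greedy delay of these cascades is bounded by $O(\tau)$ precisely because, by a packed set's starting time $\tau$, $\Omega(B)$ of its messages have already been flushed to the packed parent in $S$ (so the competing work is chargeable to $\tau$, and $\sum_m \tau(m) \le 2c(S)$). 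Below the packed parent, the schedule mimics $S$'s own flushes (schedule $L$), and the space bound follows because at most $B/2$ messages of any node's own packed sets persist plus at most $B/2$ passing through from an ancestor's packed set. Two further repairs --- inserting clearing flushes ($U_r$) so an arriving packed set finds room, and interleaving the two schedules in epochs of $h$ steps --- are also needed and are not anticipated in your plan. Separately, your first step's claim that total overflow is bounded by $\cost(S)$ is not actually used or needed in this form; the correct charging is via packed-set starting times, not via per-node overflow counts. As written, the proposal does not constitute a proof.
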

The remainder of this subsection is dedicated to proving this lemma.  We fix an overfilling schedule $S$ for a WORMS instance $(T, M, P, B)$ and build a valid schedule $\hat{S}$.

\paragraph{Defining Packed Nodes}
We define the packed nodes of $(T,M,P,B)$ via a bottom-up traversal of $T$ as follows.  
For each $h'$ from $h$ to $1$, for each node $v$ at height $h'$, 
let $C(v)$ be the set of all messages $m$ such that: (1) the target leaf of $m$ is a descendant of $v$, and (2) $m$ is not in $C(v')$ for any packed node $v'$ with $h(v') > h'$.  
Then $v$ is a \defn{packed node} if $|C(v)| \geq B/6$.
We call $C(v)$ the \defn{packed contents} of $v$. 
The root is a packed node, and its packed contents are all messages that are not in another packed node.  
Therefore, each message is in the packed contents of exactly one packed node.

\paragraph{Packed Sets}

We further subdivide packed nodes based on $S$.  
For any packed node $v$ with packed contents $C(v)$, if $v$ is a leaf, 
order messages in $C(v)$ by their completion time in $S$, and then partition $C(v)$ arbitrarily into packed sets with total weight between $B/6$ and $B/2$.
If $v$ is an internal node, order the messages within $C(v)$ by the time when the messages are flushed from $v$ to one of its children in $S$.  
Then, partition the messages into sets with total weight between%
\footnote{This can be done greedily: consider each child of $v$ in turn, adding to the current packed set until it has size at least $B/6$; since each child of $v$ has at most $B/6$ remaining messages, each resulting set has size at most $B/3$.  There may be extra children at the end with at most $B/6$ total messages; add them to the final packed set, giving size at most $B/2$.} 
$B/6$ and $B/2$
such that if two messages are flushed to the same child of $v$ at the same time they are in the same packed set. 
We call these the \defn{packed sets} of $v$, and we say that $v$ is the \defn{packed parent} of each such set, as well as each message in each set.  
We will use the term ``packed sets'' 
(without referring to a specific packed node)
to refer to all packed sets of all packed nodes.

For any node $v$, let $\calC(v) = C_1, C_2, \ldots$ be the sequence of packed sets of $v$, sorted by the last time any message in the packed set is flushed out of $v$.  

\begin{figure}[ht]
	\begin{forest}
	  for tree={%
	    inner sep=0pt,
		circle,
		minimum size=2ex,
	    draw,
		s sep=1.2ex,
	  },
	  [,label=left:$3$,name=l1,color=black,draw, thick
	  [,color=nicepurple,fill
	  	[,color=black,draw
			[,color=black,draw
				[,label=below:$40$,color=black,draw,thick]
				[,label=below:$3$,color=black,draw]
			]
			[,label=right:$11$,color=black,draw,thick
				[,label=below:$5$,color=niceblue,fill ]
				[,label=below:$6$,color=niceblue,fill ]
			]
		]  
		[,label=right:$36$,color=black,draw,thick
			[,color=niceorange,fill
				[,label=below:$6$,color=black,draw ]
				[,label=below:$3$,color=black,draw ]
			]
			[,color=niceorange,fill
				[,label=below:$9$,color=black,draw ]
			]
			[,color=niceyellow,fill
				[,label=below:$9$,color=black,draw ]
			]
			[,color=niceyellow,fill
				[,label=below:$4$,color=black,draw ]
				[,label=below:$5$,color=black,draw ]
			]
		]
	  ]
	  [,label=right:$23$,color=black,draw,thick
	  	[,color=purple,fill
	    	[,color=black,draw
	    		[,label=below:$5$,color=black,draw ]
	    		[,label=below:$3$,color=black,draw ]
	    	]
	    	[,color=black,draw
	    		[,label=below:$1$,color=black,draw ]
	    	]
	    ]  
	    [,color=purple,fill
	    	[,label=right:$14$,color=black,draw,thick
	    		[,label=below:$6$,color=nicegreen,fill ]
	    		[,label=below:$8$,color=nicegreen,fill ]
	    	]
	    	[,color=black,draw
	    		[,label=below:$3$,color=black,draw]
	    		[,label=below:$3$,color=black,draw]
	    	]
	    ]
	  ]
	  ]
	  \node[right=7ex of l1,minimum width=6ex,minimum height=3ex] (label) {$B = 60$};
	\end{forest}
	\caption{Packed sets of an example WORMS instance.  Each leaf $\ell$ is labelled with the number of messages that have $\ell$ as a target leaf.  Packed parents are bolded and labelled with the size of their packed contents.  Children of an internal packed parent are filled and are colored according to the packed set their messages belong to.}
	\label{fig:packed_sets}
\end{figure}
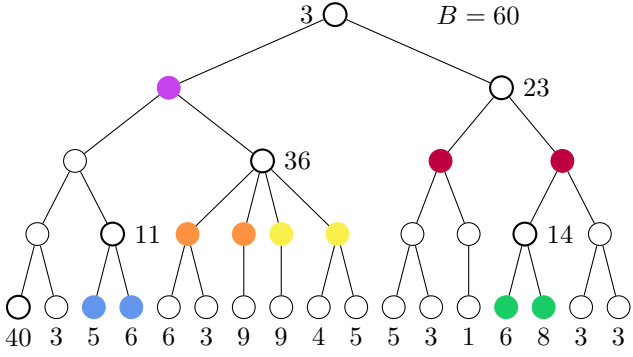

Consider a packed node $v$ with packed sets $\calC(v)$; let $C_i$ be a packed set in $\calC(v)$.
If $v$ is a leaf, define the starting time of $C_i$ to be the time step when the $B/12$th message in $C_i$ arrives at $v$.
If $v$ is an internal if $i > 1$, define the \defn{starting time} of $C_i$, denoted $\tau_i$, to be the time step when the last message in $C_{i-1}$ is flushed out of $v$ in $S$.  If $i = 1$, define the starting time to be the time step $\tau_1$ after which at least $B/12$ messages in $C_1$ have been flushed out of $v$.  

When building $\hat{S}$, the plan will be for all messages in a packed set to arrive at their packed parent at their starting time.  This may delay the message's finishing time in $\hat{S}$ past its finishing time in $S$.
Thus, we give two properties of the starting time of a packed set that allow us to charge these delays.
First, by a packed set's starting time, $\Omega(B)$ messages must have been flushed to a packed parent.  Second, the total starting times lower bound the cost of $S$.  

\begin{lemma}%
\label{lem:finishing_times_jobs_flushed}
If $C_i\in \calC(v)$ is a packed set with packed parent $v$ and start time $\tau$, at least $B/12$ messages in $C_i$ or $C_{i-1}$ were flushed to $v$ strictly before $\tau$ in $S$.
\end{lemma}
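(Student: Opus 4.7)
The proof plan is to case-split on the type of $v$ (leaf vs.\ internal) and, when $v$ is internal, on whether $i = 1$ or $i > 1$, matching the three branches of the start-time definition. The common ingredient across all cases is the flush-delay rule from the problem setup: a flush executed during time step $t$ delivers its messages into the recipient only at time step $t+1$. Hence any message that is present in $v$ at time step $\tau$, or that is being flushed out of $v$ at time step $\tau$, must have arrived via a flush into $v$ at some time step strictly before $\tau$.

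When $v$ is a leaf, $\tau$ is by definition the step at which the $(B/12)$th message of $C_i$ arrives at $v$. By the flush-delay rule, each of these first $B/12$ arrivals corresponds to a flush into $v$ that occurred at time $\tau - 1$ or earlier, so all $B/12$ of these messages from $C_i$ itself were flushed to $v$ strictly before $\tau$, giving the desired count.

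When $v$ is internal and $i > 1$, $\tau$ is the step at which the last message of $C_{i-1}$ is flushed out of $v$. Every message in $C_{i-1}$ is therefore flushed out of $v$ at some step $\leq \tau$, so it had to be present in $v$ by time $\tau$ and, again by the flush-delay rule, must have been flushed into $v$ at a step strictly before $\tau$. Since $|C_{i-1}| \geq B/6 \geq B/12$, this supplies the required messages. When $v$ is internal and $i = 1$, the same one-step-delay argument applies to the $\geq B/12$ messages of $C_1$ that have been flushed out of $v$ by the distinguished step $\tau_1$.

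The only real obstacle is being meticulous about the one-step offset in the flush model, particularly in the leaf case where the $(B/12)$th message still counts because the flush carrying it happens no later than time $\tau - 1$; otherwise the bookkeeping follows directly from the definitions of starting time and packed set, and no additional properties of $S$ beyond validity of its flushes are needed.
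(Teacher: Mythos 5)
Your proposal is correct and follows essentially the same route as the paper's proof: unpack the three branches of the starting-time definition and use the fact that a message flushed out of (or arriving at) $v$ by time $\tau$ must have been flushed into $v$ strictly earlier. Your version is slightly more explicit about the one-time-step arrival offset, but the substance is identical.
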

\begin{proof}
	By definition: if $v$ is internal, we define the starting time such that $B/12$ messages from $C_i$ or $C_{i-1}$ were flushed \emph{out} of $v$ by $\tau$; since $S$ is overfilling they must have been flushed to $v$ before this time.  If $v$ is a leaf, $B/12$ messages in $C_i$ were flushed to $v$ by $\tau$.
\end{proof}

\begin{lemma}%
\label{lem:finishing_times_lower_cost}
For any message $m$, let $\tau(m)$ be the starting time of the packed set of $m$.  Then $\sum_{m\in M} \tau(m) \leq 2c(S)$.

\end{lemma}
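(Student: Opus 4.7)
The plan is to argue termwise over packed sets, using that they partition $M$. Since each message $m$ lies in exactly one packed set $C_i$ of some packed parent $v$, both sides decompose as $\sum_m \tau(m) = \sum_{v,i} |C_i|\tau_i$ and $c(S) = \sum_{v,i}\sum_{m\in C_i}c(S,m)$, so it suffices to establish
$$|C_i|\cdot \tau_i \;\leq\; 2\sum_{m\in C_i}c(S,m)$$
for every packed set and then sum.

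First I would split by the three-part definition of $\tau_i$. For internal $v$ with $i > 1$, $\tau_i$ is the last time any message of $C_{i-1}$ is flushed out of $v$. Interpreting $\calC(v)$ as the contiguous-block partition of $C(v)$ sorted by flush-out time, every $m\in C_i$ has flush-out time at least $\tau_i$; since a message flushed out of $v$ at time $t$ cannot be at any descendant of $v$ until time $t+1$, every such $m$ satisfies $c(S,m) \geq \tau_i$. This gives $|C_i|\tau_i \leq \sum_{m\in C_i} c(S,m)$, which is stronger than the desired factor-$2$ bound.

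For the remaining cases---internal $v$ with $i = 1$, or $v$ a leaf---$\tau_i$ is the time when the $B/12$-th message of $C_i$ is flushed out of (or arrives at) $v$. Hence at most $B/12-1$ messages in $C_i$ can be flushed out of (resp.\ arrive at) $v$ strictly before $\tau_i$, so at least $|C_i| - B/12 + 1$ of them satisfy $c(S,m) \geq \tau_i$. Summing, $\sum_{m\in C_i}c(S,m) \geq (|C_i|-B/12+1)\tau_i$. The ratio $|C_i|/(|C_i|-B/12+1)$ is a decreasing function of $|C_i|$ on the interval $[B/6,\,B/2]$, attaining its worst value $2$ at the left endpoint $|C_i|=B/6$, so $|C_i|\tau_i \leq 2\sum_{m\in C_i}c(S,m)$ in these cases as well. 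Summing over packed sets completes the proof.

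The main obstacle is verifying the structural claim used in the first case: that $\calC(v)$ genuinely behaves as a contiguous partition of the flush-out-time-sorted sequence of $C(v)$, so that $i > 1$ forces every $m\in C_i$ to be flushed out of $v$ no earlier than $\tau_i$. Once that is pinned down, the three-case arithmetic collapses to the tight $|C_i|\geq B/6$ bound that delivers the constant $2$, and the $i=1$ and leaf cases follow immediately from the $B/12$-threshold definitions.
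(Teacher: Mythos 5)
Your proof is correct and follows essentially the same route as the paper's: decompose the sum over packed sets, observe that for an internal packed parent with $i>1$ every message of $C_i$ leaves $v$ no earlier than $\tau_i$, and in the $i=1$/leaf cases use the $B/12$ threshold against $|C_i|\ge B/6$ to conclude that at least half the messages of each packed set finish at or after its starting time, yielding $\sum_{m\in C_i}c(S,m)\ge \tau_i|C_i|/2$. The only difference is cosmetic: the paper states the "at least half finish after $\tau_i$" bound directly where you track the ratio $|C_i|/(|C_i|-B/12+1)$ explicitly.
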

\begin{proof}
	Consider a packed set $C_i\in \calC(v)$ with starting time $\tau_i$ and packed parent $v$.  If $i > 1$ and $v$ is internal, then each message $m\in C_i$ must finish after time step $\tau_i$ by construction.  If $i = 1$, or $v$ is a leaf, then all but the first $B/12$ messages in $C_i$ must finish after $\tau_i$ by construction.  Since $|C_i| \geq B/6$, at least half the messages in $C_i$ must finish after $\tau_i$.

	In other words, $\sum_{m\in C_i} c(S,m) \geq \tau_i(|C|/2)$.
	Thus, 
\[
	\sum_{m\in M} \tau(m) = 
	\sum_{C_i} \tau_i|C_i| \leq 
	\sum_{C_i} \sum_{m\in C_i} 2c(S, m) = 2c(S).\qedhere
\]
\end{proof}

\paragraph{Three Partial Schedules}
We define $\hat{S}$ using three schedules of flushes: $U$, $U_r$ and $L$.  Intuitively, $U$ handles all flushes of messages from the root to their packed node and $U_r$ is a modification of $U$ that can be more easily merged with $L$; $L$ handles all flushes of messages from their packed node to their target leaf.

Notice that none of $U$, $U_r$, or $L$ are necessarily valid or overfilling: $U$ and $U_r$ may not flush a message to its target leaf, and $L$ may flush a message $m$ from a node $v$ without flushing $m$ to $v$ first.  
Only when we combine $U_r$ and $L$ will we create a valid schedule $\hat{S}$.

First, we define $U$.  In short, we greedily schedule flushes of each packed set to its packed parent in order by its starting time.
More formally, consider each packed set $C$ in order by its starting time $\tau$, breaking ties arbitrarily.
Let $h(C)$ be the height of $v$, the packed node containing $C$.  We would like to schedule $h(C)$ flushes, each containing the messages in $C$, from the root to $v$ beginning at $\tau - h + 1$; thus the messages in $C$ arrive in $v$ at $\tau$.  If this is possible---there are not $P$ flushes already scheduled at time $\tau - h(C) + 1$---then we add these flushes to $U$.  Otherwise, we add these flushes to $U$ at the earliest time step when there are less than $P$ already-scheduled flushes.
Creating $U$ in this way requires $O(n\log n)$ time.

Inductively, all flushes of a packed set $C$ are in $h(C)$ consecutive time steps in $U$.
In particular, because we schedule using increasing time steps, if we find an empty slot on some machine, there will be no later non-empty slots on that machine, so we can always schedule all $h$ time steps consecutively in $U$ on the same machine.

This greedy schedule guarantees that each packed set arrives at its packed node close to its starting time.

\begin{lemma}%
\label{lem:packed_set_arrival}
For any packed set ${C}$ with starting time $\tau$ and packed parent $v$, all messages in ${C}$ arrive at $v$ in $U$ on or before time step $13\tau$.
\end{lemma}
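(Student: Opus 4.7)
The plan is to argue that when the greedy reaches $C$ in its sorted order of starting times, no machine is loaded enough to push $C$'s arrival past $13\tau$. Since the greedy either schedules $C$ in the preferred window $[\tau - h(v_C) + 1, \tau]$ (giving arrival exactly $\tau$) or defers to the earliest time step with a free slot, the arrival of $C$ in $U$ is at most $\max(\tau, \ell^\star + h(v_C))$, where $\ell^\star$ is the smallest load (last occupied time step) among the $P$ machines just before $C$ is processed. So it suffices to show $\ell^\star \le 12\tau$ together with $h(v_C) \le \tau$. The second inequality is immediate from Lemma~\ref{lem:finishing_times_jobs_flushed}: by time $\tau$ in $S$, some message associated with $C$ has already traversed the $h(v_C)$ edges from the root to $v_C$, forcing $\tau \ge h(v_C)$.

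The heart of the argument is a counting bound on the total flush slots consumed in $U$ by packed sets processed before $C$; all such sets have starting time at most $\tau$. For each packed node $v$ and each packed set $C_i \in \calC(v)$ with $\tau_i \le \tau$, Lemma~\ref{lem:finishing_times_jobs_flushed} provides $B/12$ messages in $C_i \cup C_{i-1}$ that $S$ has already delivered to $v$ by time $\tau - 1$. The key move is to select these witness sets pairwise disjointly across $\calC(v)$: for $i > 1$, draw the witnesses from inside $C_{i-1}$ (whose $\ge B/6$ messages are all at $v$ by $\tau_i - 1$, because $\tau_i$ is defined as the time the last message of $C_{i-1}$ is flushed out of $v$); for $i = 1$, draw the witnesses from $C_1$ disjointly from those used for $C_2$, which is feasible since $|C_1| \ge B/6 = 2 \cdot (B/12)$. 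Since witnesses belonging to different packed nodes automatically lie in disjoint packed contents, the number $K_v(\tau)$ of packed sets of $v$ with $\tau_i \le \tau$ satisfies $K_v(\tau) \cdot (B/12) \le M_v(\tau)$, where $M_v(\tau)$ counts messages in $v$'s packed contents already at $v$ in $S$ by time $\tau - 1$.

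Weighting by $h(v)$ and summing over $v$ gives $\sum_v K_v(\tau) h(v) \le (12/B) \sum_v M_v(\tau) h(v)$. Each message counted by $M_v(\tau)$ contributed exactly $h(v)$ message-flushes in $S$ en route to $v$, and $S$ performs at most $PB\tau$ total message-flushes by time $\tau$, so the right-hand side is at most $12 P\tau$. Averaging across the $P$ machines then yields $\ell^\star \le 12\tau$, which combined with $h(v_C) \le \tau$ produces the claimed $13\tau$ bound. The main obstacle I anticipate is exactly this disjointness step: applying Lemma~\ref{lem:finishing_times_jobs_flushed} naively would double-count each witness message across two adjacent packed sets and blow the constant up to $24$, so the explicit choice of witnesses from $C_{i-1}$---with the $i = 1$ boundary handled via $|C_1| \ge 2 \cdot (B/12)$---is what keeps the constant at $12$ and hence the overall bound at $13\tau$.
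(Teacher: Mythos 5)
Your proof is correct and follows essentially the same route as the paper's: bound the greedy's backlog when $C$ is processed by charging the total flush-work $\sum h(C')$ of earlier packed sets against the at most $PB\tau$ message-flushes that $S$ performs by time $\tau$, via the $B/12$ witnesses supplied by Lemma~\ref{lem:finishing_times_jobs_flushed}. Your explicit disjoint selection of witnesses (from $C_{i-1}$ for $i>1$, splitting $C_1$ between the accounts of $C_1$ and $C_2$) is a worthwhile refinement --- the paper applies Lemma~\ref{lem:finishing_times_jobs_flushed} to each packed set in $\mathcal{C}_E$ without explicitly addressing the potential double-counting between consecutive sets that you identify.
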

\begin{proof}

Let $\delta(C)$ be the difference between $\tau - h(C) + 1$ and the time when the first flush of $C$ begins.
	It is sufficient to prove that $\delta({C}) \leq 12\tau$.

By construction there are $P$ flushes scheduled during all time steps between $\tau - h(C) + 1$ and $\tau - h(C) + \delta({C})$.  Let $\mathcal{C}_t$ be the set of packed sets that have a flush during one of these time steps.  
	Any packed set $C$ has at most $h(C)$ flushes in $U$, and $P$ flushes are performed in that $\delta({C}) - 1$ time steps starting at $\tau$.
	Then we have that 
	\[
	\delta({C})  - 1 \leq \sum_{C \in \mathcal{C}_t} h(C)/P.
	\]

Let $\mathcal{C}_E$ consist of all packed sets with starting times before $\tau$, or equal to $\tau$ with an earlier tiebreaker.  Then $\mathcal{C}_t \subseteq\mathcal{C}_E$.

	By Lemma~\ref{lem:finishing_times_jobs_flushed}, for any $C$ in $\mathcal{C}_E$, there must have been at least $B/12$ messages flushed $h(C)$ times by $\tau$.  $S$ can flush at most $PB$ messages in a single time step; thus
	\[
	\left(\sum_{C\in \mathcal{C}_E} \frac{B}{12}\cdot h(C) \middle/PB \right) = \sum_{C\in \mathcal{C}_E} h(C)/12P \geq \tau.
	\]
	Combining, $\delta({C}) \leq 12\tau$.
\end{proof}

Now, we define $L$.
To begin, we define vocabulary based on $S$.
We call a message $m$ in a flush $f\in S$ on an edge $(v_1, v_2)$ a \defn{lower message} if $m\in C(v)$ for a packed node $v$ where $v_1$ is a descendant of $v$.  We call $f$ a \defn{lower flush} if it contains a lower message.  
By the definition of a packed set, all lower messages in a flush have the same packed parent.

To define $L$, iterate through the time steps $t$ of $S$ from $t=1$ to the end.  
Consider each of the $\leq P$ flushes $f_i$ from $v_i$ to $v_i'$ at $t$.  For each such $f_i$, if $v_i$ is the packed parent of all lower messages in $f_i$, schedule a flush in $L$ from $v_i$ to $v_i'$ 
 
in the earliest available time step\footnote{I.e.\ the first time step with fewer than $P$ flushes already scheduled.} after $27\tau_i$, consisting of all lower messages in $f_i$.
Otherwise, let $t'$ be the earliest time step in $L$ after which all lower messages in $f_i$ are in $v_i$; schedule a flush in $L$ from $v_i$ to $v_i'$ in the earliest available time step after $t'$, consisting of all lower messages in $f_i$.
$L$ can be created in $O(n\log n)$ time.

We define $L$ this way to ensure the following lemma bounding the time step when each lower flush is scheduled.

\begin{lemma}%
\label{lem:lower_flushes_time}
Any message $m$ that is a member of a packed set $C$ with starting time $\tau$ reaches its target leaf in $L$ by time step $27\tau + h + c(S, m) $.
\end{lemma}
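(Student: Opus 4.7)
The plan is to trace $m$'s path through $L$ and bound the accumulated delay via a disjoint-interval counting argument that telescopes. Let $f_1, \ldots, f_k$ denote $m$'s lower flushes in $S$, in order, with $S$-times $t_1 < \ldots < t_k = c(S, m)$ and corresponding $L$-times $T_1 < \ldots < T_k$; note $k \leq h$. Let $r_i$ be the release time used by $L$'s greedy construction, so $r_1 = 27\tau$ and $r_i = T_{i-1} + 1$ for $i > 1$ (since $m$ must arrive at $f_i$'s source one time step after its previous flush is placed in $L$).

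First I would establish that when $L$ places $f_i$, every time step in the interval $[r_i, T_i - 1]$ must have all $P$ machines occupied by lower flushes processed strictly before $f_i$ in the $S$-order traversal used to build $L$; otherwise $f_i$ would have been placed earlier. Since $r_j = T_{j-1} + 1 > T_{j-1} - 1$ for each $j > 1$, the intervals $[r_1, T_1 - 1], \ldots, [r_k, T_k - 1]$ are pairwise disjoint.

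The counting step follows: across these $k$ disjoint intervals there are $P \sum_{i=1}^k (T_i - r_i)$ occupied blocker slots, each holding a distinct lower flush whose $S$-time is at most $c(S, m)$ (having been processed before $f_k$). Because $S$ performs at most $P$ flushes per time step, the total number of lower flushes with $S$-time at most $c(S, m)$ is at most $P \cdot c(S, m)$, and hence $\sum_{i=1}^k (T_i - r_i) \leq c(S, m)$.

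Finally, substituting the release times makes the sum telescope to $T_k - 27\tau - (k - 1)$, yielding $T_k \leq 27\tau + (k - 1) + c(S, m) \leq 27\tau + h - 1 + c(S, m)$, so $m$ reaches its target leaf in $L$ by time step $T_k + 1 \leq 27\tau + h + c(S, m)$. The main obstacle I anticipate is handling the case where $f_i$ in $L$ is forced to wait for lower messages from other packed sets that share the flush but arrive at $f_i$'s source later, causing $r_i$ to exceed $T_{i-1} + 1$; handling this case cleanly---either by absorbing the extra waiting into the $c(S, m)$ term via the structure imposed by the way $L$ groups messages from each $S$-flush, or by refining the counting argument to charge the waiting against the release gaps of the offending packed sets---is the delicate part.
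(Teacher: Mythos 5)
Your argument is essentially the paper's own proof: the paper likewise attributes each unit of delay of $m$ in $L$ to $P$ blocking flushes that were scheduled earlier in the $S$-order iteration (hence have $S$-time at most $c(S,m)$), and concludes $\delta \leq c(S,m)$ by counting against the pool of at most $P\cdot c(S,m)$ candidate blockers, exactly as your disjoint-interval telescoping does. The ``delicate part'' you flag---a flush of $m$ below its packed parent having to wait for sibling lower messages that reach the source node in $L$ later than $m$ does, so that the release time exceeds $T_{i-1}+1$---is a genuine subtlety, but the paper's proof does not address it either: it simply asserts that every delay is caused by $P$ already-scheduled flushes.
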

\begin{proof}
	Consider the flushes of $m$ in $L$.  We say that $m$ is \defn{delayed} each time a flush containing $m$ cannot be scheduled in the desired time slot in $L$.  If $m$ is not delayed, it completes at time step $27\tau + (h - h(v))$ in $L$.  If all flushes containing $m$ are delayed a total of $\delta$ time steps, $m$ completes at time $27\tau + (h - h(v)) + \delta$.

	Each time $m$ is delayed, a flush $f$ (in $S$) containing $m$ cannot be scheduled due to $P$ other flushes in $L$ already being scheduled at some time step $t$.  Each of those flushes in $L$ must occur before, or at the same time as, $f$ in $S$; thus each flush must occur at or before $c(S, m)$ in $S$.  There can be at most $Pc(S,m)$ such flushes.  Thus, $\delta \leq c(S,m)$.
\end{proof}

Now, we want to combine $U$ and $L$ to obtain a valid schedule.  
To begin we observe the following: combining the flushes from both schedules gives a valid sequence of flushes for any individual message.  

\begin{observation}%
\label{obs:partial_schedules_correct}
For any message $m$, 
create a schedule $S_m$, such that if $m$ is flushed from $v_1$ to $v_2$ in $U$ or $L$, then $m$ is flushed from $v_1$ to $v_2$ in $S_m$.  Then in $S_m$, if $m$ is flushed from a node $v$ at time step $t$ then $m$ is flushed to $v$ at some time step $t' < t$; furthermore, $m$ is flushed to its target leaf $\ell_m$ by the end of $S_m$.
\end{observation}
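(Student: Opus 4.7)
The plan is to verify, for each message $m$, that the flushes of $m$ appearing in $U$ together with the flushes of $m$ appearing in $L$ form a valid root-to-leaf sequence in $S_m$: each transports $m$ from its current node to the next, and the last one deposits $m$ in its target leaf $\ell_m$.

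First I would split the root-to-$\ell_m$ path at $v$, the packed parent of $m$. For the upper half, the construction of $U$ schedules $h(v)$ consecutive flushes carrying the packed set of $m$ along the root-to-$v$ path, so $U$ contributes exactly the edges of this path in order. For the lower half, I would observe that every flush of $m$ in $S$ along an edge below or out of $v$ qualifies as a \emph{lower flush}: since $m \in C(v)$ and every such source node is a descendant of $v$ (with $v$ counted as a descendant of itself), the definition applies. Because $S$ is overfilling, $m$ reaches $\ell_m$ in $S$, and so the lower flushes of $S$ containing $m$ traverse exactly the $v$-to-$\ell_m$ path; $L$ mirrors each of these on the same edge. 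Hence the union covers every edge of the root-to-$\ell_m$ path in $S_m$ exactly once.

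Next I would check that $S_m$ respects the order of the path. Within $U$ this is immediate from the consecutive greedy placement of the flushes of a packed set on the root-to-$v$ path. Within $L$ I would use a short induction on the sequence of lower flushes of $m$ in $S$: the ``otherwise'' branch in the definition of $L$ explicitly refuses to schedule the flush out of $v_i$ until the earliest time step after which $m$ (as one of the lower messages of $f_i$) has arrived at $v_i$ in $L$, which gives the inductive step for every flush below $v$. The base case is the first lower flush of $m$, which leaves $v$; here the ``if'' branch schedules this flush in $L$ no earlier than $27\tau$, where $\tau$ is the starting time of $m$'s packed set, and Lemma~\ref{lem:packed_set_arrival} guarantees that $m$ has arrived at $v$ in $U$ by time $13\tau \le 27\tau$. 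Thus $m$ is at $v$ in $S_m$ whenever $L$ first flushes it downward.

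The only delicate point, and the one I would expect to account for most carefully, is precisely this handoff between $U$ and $L$ at the packed parent; everywhere else the argument is a book-keeping exercise on the two greedy constructions. Once the handoff is justified by the $13\tau$/$27\tau$ gap, the induction inside $L$ closes, and concatenating with the ordered path in $U$ yields the required invariant that every flush of $m$ from a node is preceded in $S_m$ by a flush of $m$ to that node, with $\ell_m$ being the terminal location.
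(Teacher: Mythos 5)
Your proposal is correct and follows essentially the same route as the paper's own proof: both split $m$'s trajectory at its packed parent $v$, use Lemma~\ref{lem:packed_set_arrival} to get arrival at $v$ in $U$ by $13\tau$ versus the $27\tau$ lower bound on the first flush out of $v$ in $L$, and rely on $L$ preserving the order of the lower flushes of the overfilling schedule $S$ (waiting for arrivals) to close the induction below $v$. Your write-up is somewhat more explicit about the handoff and the induction, but it is the same argument.
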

\begin{proof}
	Let $C$ be the packed set of $m$ and $v$ be its packed parent; let $\tau$ be the starting time of $C$.  Then all messages in $C$ are flushed to $v$ by time step $13\tau$ in $U$ by Lemma~\ref{lem:packed_set_arrival}.  Each message in $C$ is flushed out of $v$ at time step $27\tau$ or later by definition.

Since $S$ is overfilling, there is a sequence of flushes containing $m$ from $v$ to its target leaf; each flush in $S$ from node $v_1$ to $v_2$ occurs at a time step when $m$ is in $v_1$.  
Since we add flushes to $L$ in order of time steps in $S$ (waiting for any messages in the flush to arrive if necessary), all flushes in $S$ are in $L$, and they are in the same order they were in $S$, and we do not flush a lower message out of a out of a node until it is flushed into the node.
\end{proof}

However, there remains an issue which motivates us to alter $U$.  As-is, when a packed set arrives at an internal packed node $v$ in $U$, it may be that there are still messages in the node that were not flushed out of $v$; thus, there may not be room for these new messages.  (This is because a flush in $L$ may be delayed far past the corresponding flush in $S$.)

To resolve this, we modify $U$ to create a schedule $U_r$ that ensures there is enough room in each packed parent for an arriving packed set.  In short, to create $U_r$, we add additional flushes to $U$ to guarantee that each node flushed to has space.

Specifically, for each packed set $C_i$ with an internal packed parent, let $\hat{\tau_i}$ be the time step that $C_i$ arrives at its packed node $v_i$.  Perform the following for all packed sets, in order by $\hat{\tau_i}$.  Immediately before $\hat{\tau_i}$, insert all flushes in $L$ from $v_i$ that are performed after time step $\hat{\tau_i} - h$.  (We subtract $h$ because, in short, we will delay flushes in $L$ by up to $h$ time steps when constructing $\hat{S}$ below.)  These flushes can all be performed in parallel; thus, if we insert $x$ flushes, all later flushes in $U$ are delayed by $\lceil x/P\rceil$ time steps.
We can create $U_r$ in $O(n\log n)$ time.

Let $U_r$ be the schedule obtained after this modification.  
By construction, every time a packed set arrives at its packed parent $v$ in $U_r$, $v$ contains no other messages from a packed set of $v$.
Furthermore, $U_r$ satisfies the following lemma.

\begin{lemma}%
\label{lem:upper_flushes_modified_correct}
For any packed set $C$ with starting time $\tau$, all messages in $C$ arrive at their packed parent $v$ in $U_r$ at time step $\hat{\tau}$ satisfying $\hat{\tau} \leq 27\tau + h$.
\end{lemma}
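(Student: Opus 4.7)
The plan is to write $\hat\tau = \tau^U + \delta$, where $\tau^U$ is the arrival time of $C$ at $v$ in the original (unmodified) schedule $U$ and $\delta$ is the cumulative delay introduced by the insertions performed while building $U_r$. Lemma~\ref{lem:packed_set_arrival} directly gives $\tau^U \le 13\tau$, so it suffices to prove $\delta \le 14\tau + h$.

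First, I would express $\delta$ as a sum over packed sets. Whenever the construction processes a packed set $C_j$ with $\hat{\tau_j} \le \hat\tau$, it inserts $N_j$ new flushes immediately before $\hat{\tau_j}$, delaying every subsequent time step of $U_r$ by at most $\lceil N_j / P \rceil$. Hence
\[
\delta \;\le\; \sum_{j:\, \hat{\tau_j} \le \hat\tau} \left\lceil \frac{N_j}{P} \right\rceil,
\]
and the heart of the proof reduces to showing $\sum_j N_j \le 14 P\tau$ (with an $O(Ph)$ additive term absorbing the rounding losses and the $h(C)$-length tail of the cascade carrying $C$ from the root down to $v$).

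I would establish this bound via a charging argument. Each inserted flush is a distinct flush from $v_j$ in $L$---distinct because, by construction, each $L$-flush is inserted into $U_r$ at most once---and each such $L$-flush corresponds to a unique lower flush of $S$ on the same edge. Charging each inserted flush to its corresponding $S$-flush, I would then argue that only $S$-flushes occurring by some time $O(\tau)$ can be charged: Lemma~\ref{lem:finishing_times_jobs_flushed} forces each relevant $C_j$ to have $\Omega(B)$ of its messages already flushed in $S$ by its starting time $\tau_j \le \hat{\tau_j}/\Theta(1)$, and since $S$ moves at most $PB$ messages per time step, the total volume of chargeable $S$-flushes is $O(P\tau)$. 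The main obstacle is ensuring no $S$-flush is charged by too many insertion events, since cascades in $S$ can route messages of several packed sets through the same node; to avoid overcounting I would use that the insertions done when processing $C_j$ are restricted to flushes leaving the specific packed parent $v_j$, pinning each $S$-flush on edge $(v_j, v')$ to a unique $C_j$-processing step and completing the bound.
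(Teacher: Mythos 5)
Your decomposition $\hat{\tau} = \tau^U + \delta$ with $\tau^U \le 13\tau$ from Lemma~\ref{lem:packed_set_arrival}, and the target $\delta \le 14\tau + h$, is exactly the paper's structure, and your observation that each inserted flush is a distinct lower flush of $S$ (so the charging is injective) is also what the paper implicitly uses. The gap is in how you justify that the chargeable $S$-flushes number only $O(P\tau)$. You invoke Lemma~\ref{lem:finishing_times_jobs_flushed} ($\Omega(B)$ messages of each relevant $C_j$ flushed \emph{to} $v_j$ by $\tau_j$) together with the $PB$ messages-per-step bound. That argument bounds the \emph{number of relevant packed sets} $C_j$ with $\tau_j \le \tau$ by $O(P\tau)$ --- it is precisely the argument already used inside Lemma~\ref{lem:packed_set_arrival} to bound the delay in $U$ --- but it does not bound $\sum_j N_j$, the total number of $L$-flushes inserted. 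A single packed parent $v_j$ may have many lower flushes in $L$ (its packed contents can be dribbled out to many children across many time steps, each flush carrying few messages), so a message-volume bound of $O(PB\tau)$ does not convert into a flush-count bound of $O(P\tau)$.

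The missing ingredient is the definition of starting times, not Lemma~\ref{lem:finishing_times_jobs_flushed}: when $C_{i+1} \in \calC(v')$ arrives at $v'$ in $U_r$, the flushes inserted are (essentially) the flushes out of $v'$ of messages in $C_i$ and earlier packed sets of $v'$, and by the definition of $\tau_{i+1}$ all of those flushes occur in $S$ \emph{strictly before} $\tau_{i+1}$. Since packed sets arrive at their packed parents in order of starting time, every packed set processed before $C$ has starting time at most $\tau$, so every charged $S$-flush occurs before time $\tau$ (up to the $h$-window slack). Now the bound of $P$ \emph{flushes} per time step of $S$ --- not $PB$ messages --- gives $\sum_j N_j \le P(\tau + h)$ directly, and $\delta \le \tau + h + t_U \le 14\tau + h$ follows after accounting for the ceilings. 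With that substitution your proof matches the paper's.
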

\begin{proof}
For any time step $t$, let $x_t$ be the number of flushes inserted immediately before $t$ into $U$ to obtain $U_r$---i.e.\ $x_t$ is the sum, over all packed parents $v_\ell$ with packed sets arriving at $t$, of the number of flushes out of $v_\ell$ in $L$ that occur after $t - h$.  By definition, if a flush occurs at time step $t'$ in $U$, it occurs at time step $t' + \sum_{t \leq t'} \lceil x_{t}/P\rceil$ in $U_r$.  Let $t_U$ be the time step when all messages in $C$ arrive at $v$ in $U$; thus we want to show that $t_U + \sum_{t \leq t_U} \lceil x_{t}/P\rceil \leq 27\tau$.
  
Let $C_i,C_{i+1}\in \calC(v')$ be two packed sets of some node $v'$; let $\tau_{i+1}$ be the starting time of $C_{i+1}$.  By definition, all flushes of messages in $C_i$ out of $v$ occur before $\tau_{i+1}$ in $S$.  Then, since the time step at which all packed sets in $U$ arrive at their packed parent is in order by starting time,
$\lceil \sum_{t \leq t_U} x_t/P\rceil \leq \tau + h$.

By Lemma~\ref{lem:packed_set_arrival}, $t_U \leq 13\tau$.  
Then we can upper bound when $C$ arrives in $U_r$.
\begin{align*}
	t_U + \sum_{t \leq t_U} \lceil x_{t}/P\rceil &\leq
	t_U + \left\lceil \sum_{t \leq t_U} x_{t}/P + 1 \right\rceil \\
												 &\leq t_U + (\tau + h +  t_U) \\
												 &\leq 27\tau + h.\qedhere
\end{align*}
\end{proof}

\paragraph{Defining $\hat{S}$}
Divide $U_r$ and $L$ into \defn{epochs} of $h$ time steps; we use epoch $i \in \{0, 1, \ldots \}$ to refer to time steps $\{hi + 1, \ldots, hi + h\}$.  Each flush in epoch $i$ of $U_r$ or $L$ will occur at some time step in $\{3hi + 1, \ldots,  3hi +4h \}$ of $\hat{S}$.

Iterate through the epochs starting at $0$.  Consider a sequence of flushes from the root to some packed node $v$ in $U_r$ where the first flush occurs at time step $hi + t$ in $U_r$ for $1 \leq t \leq h$; perform these flushes starting at time step $3hi + h + t$ in $\hat{S}$.  These flushes are all performed by time step $3hi + 2h$.  

For any flush $f$ from $v_1$ to $v_2$ in $L$ at time step $hi + t$ in $L$, 
obtain $M_f'$ by removing all messages in $M_f$ that are flushed from $v_1$ to $v_2$ in $U_r$.  Then schedule a flush in $\hat{S}$ of all messages in $M_f'$ at time step $3hi + 2h + t$.  These flushes all complete by time step $3hi + 3h$.

Creating $\hat{S}$ from $U_r$ and $L$ requires $O(n)$ time.

\paragraph{Analysis of $\hat{S}$}
From the above, $\hat{S}$ can be created (including the time to create $U_r$ and $L$) in $O(n\log n)$ total time.
We are left to prove that $\hat{S}$ is a valid schedule, and that its cost is bounded by the cost of $S$.  

\begin{proof}[Proof of Lemma~\ref{lem:overfilling_to_valid}]
	To show that $\hat{S}$ is valid, we must first show that each flush only consists of messages that are already in the correct node, and that all messages are flushed to their leaf.  

	Consider the flushes containing a message $m$.  By construction, all flushes containing $m$ in $U_r$ occur in the same order in $\hat{S}$ and all flushes containing $m$ in $L$ occur in the same order in $\hat{S}$.  Let $i$ be the epoch in which $m$ is flushed to its packed parent. 
By Observation~\ref{obs:partial_schedules_correct}, there is a flush containing $m$ for each edge from the root to its target leaf, 
all flushes containing $m$ in $U$ occur in epoch $i$ or earlier, and all flushes containing $m$ in $L$ occur in epoch $i$ or later.  
All flushes in $L$ in epoch $i$ occur later than all flushes in $U_r$ in epoch $i$.  
Therefore, all flushes contain messages already in the correct node, and all messages are flushed to their leaf.

In each of $U_r$ and $L$, there are at most $P$ flushes in any time step, and each flush contains $B$ messages.  Each time step of $\hat{S}$ contains messages from flushes in from exactly one of $U_r$ or $L$, so $\hat{S}$ also has at most $P$ flushes in each time step, and each flush contains at most $B$ messages.

Thus, $\hat{S}$ is overfilling; we must further show that it is valid.  Consider a node $v$ at time step $t$.  There can be at most $B/2$ messages with a packed parent that is an ancestor $v'$ of $v$ with $v' \neq v$, because every descendant of a packed parent contains at most $B/2$ messages from that packed parent from the definition of the packed sets.
If a message $m$ has a packed parent $v'$ that is a descendant of $v$ with $v' \neq v$, then $m$ will be flushed out of $v$ at time step $t+1$ by definition of $U$, $U_r$, and $\hat{S}$---in particular, all flushes of messages to their packed parent in $\hat{S}$ are consecutive.  (Thus, $v$ may overflow, but still satisfies the space requirement.)
Finally, if the packed parent of $m$ is $v$, let $C_i$ be the packed set of $m$.  
Let $t'$ be the time when $m$ arrives at $v$ in $U_r$.
By definition of $U_r$, any message not in $C_i$ is either (1) flushed out of $v$ in $U_r$ before $t'$, in which case it is flushed out before $t$ in $\hat{S}$, or (2) flushed out of $v$ in $L$ by time step $t' - h$; this is an earlier epoch than $t'$, so again this flush occurs before $t$.
Putting the above together,
the messages in $v$ at time step $t$ that are still in $v$ at time step $t+1$ consist of at most $B/2$ messages whose packed parent is $v$, and at most $B/2$ messages whose packed parent is an ancestor of $v$, for $B$ messages total.

Finally, we show the cost.
If a message $m$ is completed at time step $c(m, S)$ in $S$, then by Lemma~\ref{lem:lower_flushes_time} it is completed at time step $27\tau + c(m, S) + h$ in $L$ or $U_r$ ($m$ is completed in $U_r$ if its packed parent is a leaf); therefore it is completed by time step $3(27\tau + c(m, S) + h) + 3h$ in $\hat{S}$.  
For all $m$, $c(m, S) \geq h$.
Thus we can bound
\[
	c(\hat{S}) \leq \sum_{m\in M} 81\tau_m + 7c(S, m).
\]
Replacing with Lemma~\ref{lem:finishing_times_lower_cost},
\[
	c(\hat{S}) \leq 162 c(S) + \sum_{m\in M} 7c(S, m) \leq 169c(S).
\]
Setting $c_1 = 169$ we are done.
\end{proof}

\subsection{Reduction}%
\label{sec:reduction}

Now we are ready to show how to reduce WORMS---in particular, the problem of finding an overfilling schedule for WORMS---to $\scheduling$.

To begin, we describe the reduction. For any WORMS instance $(T,M,P,B)$, we describe how to create an instance of $\scheduling$ which we denote $\calT(T,M,P,B)$.  
First, we define the \defn{oblivious packed sets}---these serve a function similar to the packed sets in Section~\ref{sec:converting_overfilling_schedules_to_valid_schedules}; however, the packed sets are defined using a schedule $S$, whereas the oblivious packed sets depend only on $(T,M,P,B)$.  To create the oblivious packed sets, for each internal packed node $v\in T$, we partition the children of $v$ arbitrarily into sets such that the total number of descendants of all children in a set is between $B/6$ and $B/2$.  
(This can be done using the same method as the packed sets, but we no longer order the children of $v$ by the time step they are flushed in some schedule $S$.)  
If $v$ does not have children, we partition the messages arbitrarily into sets of size between $B/6$ and $B/2$.

For each oblivious packed set $C$ with packed parent $v\in V_P$, we create a chain of $h(v)$ tasks with weight $0$ in $\calT(T,M,P,B)$; each has the previous as a precedence constraint.  The first task in the chain has no precedence constraint.  Let\footnote{We use $j$ to denote tasks since $t$ denotes time.} $j'$ be the last task in the chain.  
 
If $v$ is an internal node,
we recursively copy the subtree rooted at $v$ in $T$ to $\calT(T,M,P,B)$.
Specifically, if $c$ is the number of children of $v$ in $T$, create $c$ tasks with weight $0$ and $v$ as a precedence constraint (one for each outgoing edge of $v$ in $T$), then do the same recursively for each such child.  
When $v$ is a leaf $\ell\in T$, we create a task $j_v$ in $\calT(T,M,P,B)$ representing the in-edge of $v$; set the weight of this task to be the number of messages that have $v$ as their target leaf.  

\begin{figure}
	\centering
	\begin{forest}
	  for tree={%
	    inner sep=0pt,
		circle,
		minimum size=2ex,
	    draw,
		s sep=1ex,
		edge=->
	  },
	  [,name=l1,phantom
	  [,color=black,draw
	  [,color=black,draw
	  	[,color=black,draw
			[,color=black,draw
				[,label=below:$30$,color=red,fill]
			]
		]  
	  ]
	  ]
	  [,color=black,draw
	  [,color=black,draw
	  	[,color=black,draw
			[,color=black,draw
				[,label=below:$10$,color=red,fill]
			]
		]  
	  ]
	  ]
	  [,color=black,draw
	  [,color=nicepurple,fill
	  	[,color=black,draw
			[,color=black,draw
				[,label=below:$3$,color=black,draw]
			]
		]  
	  ]
	  ]
	  [,color=black,draw
	  [,color=black,draw
	  	[,color=black,draw
			[,color=black,draw 
				[,label=below:$5$,color=niceblue,fill ]
				[,label=below:$6$,color=niceblue,fill ]
			]
		]  
	  ]
	  ]
	  [,color=black,draw
	  [,color=black,draw
			[,color=black,draw
				[,color=niceorange,fill
					[,label=below:$6$,color=black,draw ]
					[,label=below:$3$,color=black,draw ]
				]
				[,color=niceorange,fill
					[,label=below:$9$,color=black,draw ]
				]
		  ]
	  ]
	  ]
	  [,color=black,draw
	  [,color=black,draw
		  [,color=black,draw
				[,color=niceyellow,fill
					[,label=below:$9$,color=black,draw ]
				]
				[,color=niceyellow,fill
					[,label=below:$4$,color=black,draw ]
					[,label=below:$5$,color=black,draw ]
				]
		  ]
	  ]
	  ]
	  [,color=black,draw
	  [,color=black,draw
	    [,color=black,draw
	    	[,color=black,draw
				[,label=below:$6$,color=nicegreen,fill ]
	    		[,label=below:$8$,color=nicegreen,fill ]
	    	]
	    ]
	  ]
	  ]
	  [,color=black,draw
	  [,color=black,draw
	  	[,color=purple,fill
	    	[,color=black,draw
	    		[,label=below:$5$,color=black,draw ]
	    		[,label=below:$3$,color=black,draw ]
	    	]
	    	[,color=black,draw
	    		[,label=below:$1$,color=black,draw ]
	    	]
	    ]  
	    [,color=purple,fill
	    	[,color=black,draw
	    		[,label=below:$3$,color=black,draw]
	    		[,label=below:$3$,color=black,draw]
	    	]
	    ]
	  ]
	  ]
	  ]
	\end{forest}
	\caption{This figure shows the precedence constraints of tasks in $\calT(T,M,P,B)$ for the WORMS instance in Figure~\ref{fig:packed_sets}; the coloring of each packed set matches between the figures, except the red packed sets which had a leaf packed parent.  All internal tasks have weight $0$; the leaves are labelled with their weight. If all descendant leaves of a task have weight $0$ the task is omitted.}
	\label{fig:reduction}
\end{figure}
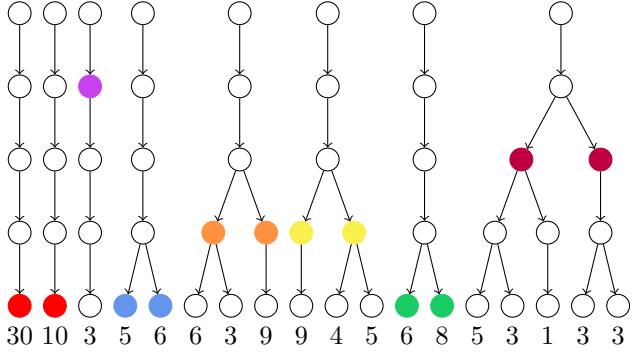

If $v$ is a leaf, set the weight of $j'$ to be the number of messages in $C$.
In Figure~\ref{fig:reduction} we show an example of $\calT(T,M,P,B)$, where $(T,M,P,B)$ is as given in Figure~\ref{fig:packed_sets}.

\begin{lemma}%
\label{lem:scheduling_to_WORMS}
Given a solution $\sigma$ to a $\scheduling$ instance $\calT(T, M, P, B)$, there exists an overfilling schedule $S'$ for $(T, M, P, B)$ with $c(S') = \cost(\sigma)$.
\end{lemma}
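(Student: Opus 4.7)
The plan is to translate each task of $\sigma$ into a single flush of $S'$ executed at the same time step, so that parallelism and completion-time contributions line up exactly. Recall that every task in $\calT(T,M,P,B)$ was generated from a specific oblivious packed set $C$ (with packed parent $v$) and corresponds to a specific edge of $T$: a chain task at depth $d$ corresponds to the $d$-th edge on the root-to-$v$ path, while a subtree-copy task corresponds to an edge $(u,u')$ of $T$ with $u$ a descendant of $v$. To each task I would associate a flush that carries exactly those messages of $C$ whose target leaf is a descendant of the lower endpoint of the associated edge (so a chain task carries all of $C$, and a subtree task carries the messages of $C$ that go down $(u,u')$). Then I schedule every such flush at the time step of its task.

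Next I would verify that $S'$ is overfilling. Parallelism is immediate since $\sigma$ uses at most $P$ machines per step. Each flush carries at most $|C|\le B/2\le B$ messages, so the flush-size constraint is satisfied. Concurrent flushes come from distinct tasks: if they come from different packed sets their message sets are trivially disjoint; if they come from the same packed set $C$, the precedence constraints force them to correspond to edges in disjoint subtrees of $T$ (no chain task can be concurrent with a subtree task since the chain is ancestor to the subtree copy, and two concurrent subtree tasks must be precedence-incomparable, hence in disjoint subtrees). The precedence constraints — the chain order followed by the tree order of the subtree copy — are precisely the condition that, whenever a task fires, every message it is about to move is already present at the source node of its edge in $S'$. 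Since chain plus subtree copy traces the full root-to-leaf path of every message of $C$, every message reaches its target leaf. No capacity claim is made about internal nodes, which is why only the overfilling property is asserted.

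For the cost equality, I would observe that each message $m\in C$ is placed onto its target leaf $\ell_m$ by a unique task $j(m)$ of $\calT$: the leaf task for $\ell_m$ in $C$'s subtree copy when $v$ is internal, and the last chain task $j'$ of $C$ when $v$ is a leaf. By construction $c(S',m) = c(\sigma, j(m))$, so
\[
c(S') \;=\; \sum_{m\in M} c(\sigma, j(m)) \;=\; \sum_{j} c(\sigma, j)\,\bigl|\{m : j(m) = j\}\bigr|.
\]
The weights in $\calT$ were defined to equal exactly $|\{m : j(m)=j\}|$: a leaf task inherits the count of messages in its packed set targeting that leaf, and the terminal chain task of a leaf packed parent inherits $|C|$; all remaining tasks have weight $0$ and correspond to no message's final arrival. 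Therefore $c(S') = \sum_{j} c(\sigma,j)\, w(j) = \cost(\sigma)$.

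The step I expect to be the main obstacle is the bookkeeping: making sure the two flavors of task (chain versus subtree copy) compose cleanly, verifying that a single edge of $T$ may legitimately host several distinct concurrent flushes (one per packed set) without violating validity, and confirming that the weight assignment in the reduction captures exactly each message's terminal arrival with no double counting. There is no deeper conceptual difficulty; the reduction is essentially one-to-one once the weight bookkeeping is pinned down.
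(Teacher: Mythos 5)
Your proposal is correct and follows essentially the same route as the paper's (much terser) proof: map each task to a flush across its associated edge at the task's time step, let precedence constraints guarantee validity, and match costs via the weight assignment on terminal tasks. Your version is in fact more careful than the paper's one-liner ``flush all messages in $C$ across $e$'' in that you restrict each subtree-copy flush to the messages actually descending through that edge and explicitly check disjointness of concurrent flushes, both of which are necessary for the flushes to be valid.
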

\begin{proof}
Each task $j$ in $\calT(T, M, P, B)$ was created for a packed set $C$ and an edge $e\in T$.  When $j$ is processed in $\sigma$, we flush all messages in $C$ across $e$ in $S$.  The precedence constraints guarantee that the flushes are valid.  We have $c(S') = \cost(\sigma)$ because each time a task with nonzero weight completes in $\sigma$, all messages that contribute to the weight of that task reach their leaf in $S'$.
\end{proof}

\begin{lemma}%
\label{lem:WORMS_to_scheduling}
Given an overfilling solution $S$ to a WORMS instance $(T, M, P, B)$, there is a solution $\sigma$ to $\calT(T, M, P, B)$ such that $\cost(\sigma) \leq c_1\cdot c(S)$, with $c_1$ as defined in Lemma~\ref{lem:overfilling_to_valid}.
\end{lemma}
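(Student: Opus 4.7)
The plan is to apply Lemma~\ref{lem:overfilling_to_valid} to $S$ to get a valid schedule $\hat{S}$ with $c(\hat{S}) \leq c_1 c(S)$, and then read $\sigma$ off the flush structure of $\hat{S}$.

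To make the mapping between $\hat{S}$'s flushes and the tasks of $\calT(T,M,P,B)$ clean, I would first rerun the construction of Lemma~\ref{lem:overfilling_to_valid} using the oblivious packed sets of Section~\ref{sec:reduction} as the packed-set partition, rather than the $S$-dependent one. The proof of Lemma~\ref{lem:overfilling_to_valid} relies only on each packed set having size in $[B/6, B/2]$ and on the coherence property that any two messages flushed to the same child of their packed parent lie in the same set; oblivious packed sets satisfy both automatically, since they partition the children of each packed node. Ordering them by the last time any of their messages is flushed out of the packed parent in $S$ lets the starting-time analyses of Lemmas~\ref{lem:finishing_times_jobs_flushed} and~\ref{lem:finishing_times_lower_cost} carry through verbatim, and the rest of the construction of $U$, $U_r$, $L$, and $\hat{S}$ does not distinguish the two partitions. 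So the resulting $\hat{S}$ still satisfies $c(\hat{S}) \leq c_1 c(S)$.

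I would then define $\sigma$ by mapping flushes to tasks. For each flush of $\hat{S}$ at time step $t$ on edge $e$ carrying an oblivious packed set $C$, schedule the corresponding chain-or-subtree task $(C, e)$ in $\calT(T,M,P,B)$ at time $t$; if a task has several corresponding flushes, keep only the latest. This $\sigma$ is valid because (i)~every flush of $\hat{S}$ carries messages from a single oblivious packed set---in $U_r$ by construction, and in $L$ because lower messages of a flush share a packed parent $v$ and all pass through the one child of $v$ on the path to the flushed edge---so $\sigma$ uses at most $P$ tasks per time step, and (ii)~each $C$-message crosses a parent edge in $T$ strictly before a child edge in $\hat{S}$, so last-crossing times are strictly increasing along each chain and each subtree path, respecting the precedence in $\calT$.

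The main obstacle, and where I would be most careful, is bounding $\cost(\sigma)$: within a single oblivious packed set the construction of $L$ might split messages sharing a target leaf $\ell$ across different flushes, so scheduling $j_\ell$ at the latest arrival time could inflate the cost by a factor of $w_\ell$. To prevent this I would refine $L$ so that messages in the same oblivious packed set that share a target leaf are always flushed together. The refinement only merges existing $L$-flushes, so it neither increases the number of flushes per time step nor affects the $27\tau + h$ timing argument of Lemma~\ref{lem:lower_flushes_time}, and the bound $c(\hat{S}) \leq c_1 c(S)$ is preserved. With this refinement, all messages targeting $\ell$ arrive at $\ell$ simultaneously in $\hat{S}$, so scheduling $j_\ell$ at that common time yields $\cost(\sigma) = \sum_\ell w_\ell\,c(\sigma, j_\ell) = \sum_{m\in M} c(\hat{S}, m) = c(\hat{S}) \leq c_1 c(S)$, as required.
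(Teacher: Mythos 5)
Your overall route is the same as the paper's: rerun the construction of Lemma~\ref{lem:overfilling_to_valid} with the oblivious packed sets in place of the $S$-dependent ones, observe that every flush of $\hat{S}$ then carries messages from a single oblivious packed set (your argument for the $L$-flushes---that all lower messages in a flush share a packed parent and pass through a single child of it---is exactly the paper's), and read $\sigma$ off the flushes. To your credit, you also notice a subtlety the paper's two-sentence proof glosses over: a single task of $\calT(T,M,P,B)$ may correspond to \emph{several} flushes of $\hat{S}$ on the same edge, since $L$ mirrors $S$ and $S$ may move a packed set across an edge in installments.

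The gap is in how you resolve that subtlety. Scheduling the task at the \emph{latest} corresponding flush and then ``merging'' the $L$-flushes of same-target messages does not preserve the cost bound. If you merge to the earlier time, validity fails (the later message has not yet reached the source node); if you merge to the later time, every message in the group inherits the completion bound $27\tau + h + c(S,m')$ of the \emph{latest} member $m'$, and $k\cdot c(S,m')$ is not within a constant factor of $\sum_i c(S,m_i) + \sum_i \tau_{m_i}$ (take $k-1$ messages finishing at time $h$ in $S$ and one finishing at time $T \gg k h,\, k\tau$). So the claim that the refinement ``does not affect the $27\tau+h$ timing argument'' is exactly where the constant factor can blow up. The fix is simpler than a refinement of $L$: schedule each task at the time of the \emph{first} flush of its packed set across its edge, and discard later flushes. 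Precedences are still respected because the first message to cross a child edge must have crossed the parent edge strictly earlier, so first-crossing times strictly increase along every root-to-leaf path; there are still at most $P$ tasks per step; and the leaf task $j_\ell$ completes no later than the arrival of \emph{any} message it weights, so $w_\ell\, c(\sigma,j_\ell) \leq \sum_{m\to\ell} c(\hat{S},m)$, giving $\cost(\sigma) \leq c(\hat{S}) \leq c_1\, c(S)$ with no modification to $\hat{S}$ at all.
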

\begin{proof}
	Let $\hat{S}$ be as defined in Section~\ref{sec:converting_overfilling_schedules_to_valid_schedules} using $S$ and the oblivious packed sets of $(T, M, P, B)$.  Note that all flushes in $\hat{S}$ are of messages from a single oblivious packed set.  For $U$ this is by definition; for $L$ this is because all messages whose target leaf is a descendant of a given child of their packed parent are a part of the same oblivious packed set; for $U_r$ and $\hat{S}$ this follows from $L$ and $U$.
Thus, all flushes in $\hat{S}$ correspond to a task in $\calT(T, M, P, B)$.  The lemma follows from Lemma~\ref{lem:overfilling_to_valid}.
\end{proof}

Together, Lemmas~\ref{lem:scheduling_to_WORMS} and~\ref{lem:WORMS_to_scheduling} (with Lemma~\ref{lem:overfilling_to_valid}) show that solving $\calT(T,M, P, B)$ is equivalent, up to constant factors, to solving $(T,M, P, B)$.

\section{Solving $\scheduling$}%
\label{sec:algorithm}

In this section we discuss how to solve any instance $\calT$ of $\scheduling$.  We will use this to give an $O(1)$-approximate solution for $\calT \gets \calT(T, M, P, B)$ to solve a WORMS instance $(T, M, P, B)$.

\subsection{Single-Machine Case}

For the single-processor case (i.e.\ $P = 1$) we are done:  an optimal algorithm for $1|outtree|\sum wC$, Horn's algorithm, was given in~\cite{horn1972single}.  

We give an exposition of Horn's algorithm here.  While the algorithm is already known, we will modify Horn's algorithm to create a parallel algorithm below.

Let $\calT$ be a set of tasks with tree precedence constraints.  
  
For any subtree\footnote{A subtree $\calT'$ of $\calT$ must be a tree (as opposed to a forest---$\calT'$ can have only one root node), and must be contiguous---for any non-root $j\in \calT'$, the parent of $j$ must also be in $\calT'$.} $\calT'$,
its \defn{weight}, denoted $w(\calT')$, is the total weight of all tasks in $\calT'$; the \defn{size} $s(\calT')$ is the number of tasks in $\calT'$.
The \defn{density} of $\calT'$ is $w(\calT')/s(\calT')$.

For any task $j\in \calT$, let $F_j$ be the highest density subtree of $\calT$ rooted at $j$.  We define the \defn{task density} of $j$ to be the density of $F_j$.
Horn gives a algorithm for calculating the task density for all tasks $j$~\cite{horn1972single}.

Horn's algorithm can be defined as follows on an instance $\calT$.  
First, for each task $j\in \calT$, calculate the task density of $j$.  
Then, add all tasks in $\calT$ with no precedence constraints to a priority queue $PQ$, where the priority of a task is its task density.  At each time step, remove the highest-priority task $j'$ from $PQ$ (breaking ties arbitrarily) and process $j'$. Then, add all children of $j'$ to $PQ$.

\begin{lemma}[\cite{horn1972single}]%
\label{lem:hornsoptimal}
Horn's algorithm is an optimal schedule for any instance $\calT$ of $\scheduling$ with $P=1$.  
\end{lemma}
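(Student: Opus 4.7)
The plan is to prove Lemma~\ref{lem:hornsoptimal} via a classical exchange argument, treating maximum-density subtrees as atomic scheduling units. Two elementary ingredients drive the proof. First, a \emph{block-swap identity}: if a schedule places a contiguous block $A$ of $|A|$ tasks with total weight $w(A)$ immediately before another contiguous block $B$ of $|B|$ tasks with total weight $w(B)$, swapping the two blocks changes the total weighted completion time by exactly $|B|\,w(A) - |A|\,w(B)$. Hence, when the swap is valid with respect to the outtree precedence constraints, it weakly decreases cost iff the density $w(B)/|B|$ is at least the density $w(A)/|A|$. Second, an \emph{outtree feasibility observation}: if $j$ is currently available (its parent, if any, has already been processed), then the entire subtree of $\calT$ rooted at $j$ has no incoming precedence edges from outside, so it can legally be scheduled as a contiguous block starting at any time step after $j$'s parent's position.

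The main claim is that in some optimal schedule $\sigma$, the tasks of $F^* := F_{j^*}$---where $j^*$ is a currently-available task of maximum task density $\rho^* := w(F^*)/|F^*|$---occupy positions $1,\ldots,|F^*|$. I would prove the claim in two moves. First, \emph{compact} $F^*$ into a contiguous block within $\sigma$ by iteratively block-swapping a run of $F^*$-tasks against an adjacent run of non-$F^*$-tasks wedged inside $F^*$'s footprint. Second, \emph{slide} the resulting $F^*$-block leftward to positions $1,\ldots,|F^*|$ using the block-swap identity, since $\rho^*$ is by assumption maximal among all available subtrees' task densities. With the main claim established, the proof concludes by strong induction on the number of tasks: after $F^*$ is removed, the residual instance is a smaller $\scheduling$ problem (with $j^*$'s children promoted into the available pool), on which Horn's algorithm is optimal by the inductive hypothesis, and Horn's algorithm itself produces $F^*$ first (and continues into the subtree by the same rule) by construction.

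The main obstacle is justifying the density comparisons required by the compaction step. The individual densities of arbitrary sub-blocks of $F^*$ can be strictly less than $\rho^*$, so it is not immediate that an $F^*$-run can profitably be swapped against an adjacent non-$F^*$-run. The resolution uses the maximality built into the definition of $F_{j^*}$: removing any pendant subtree $T'$ from $F^*$ leaves a strictly smaller subtree still rooted at $j^*$, whose density is at most $\rho^*$ by the maximality of $F_{j^*}$, which forces $w(T')/|T'| \geq \rho^*$. Applied to the suffix of $F^*$-tasks sitting below a wedged non-$F^*$-run, this shows that the corresponding block swap always satisfies the density condition of the block-swap identity. Verifying that each swap preserves precedence constraints (using that $F^*$ is outtree-closed under $j^*$) and that the iteration terminates is then routine bookkeeping.
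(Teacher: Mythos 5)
The paper does not actually prove this lemma---it is imported wholesale from Horn's 1972 paper---so your proposal has to be judged against the classical argument rather than anything in the text. Your overall architecture (exchange argument, then induction on the residual instance after peeling off $F^*$) is the right shape, and your key structural ingredient is correct and genuinely important: any set of $F^*$-tasks still unscheduled at some point of a feasible schedule is a union of pendant subtrees of $F^*$, and by maximality of $F_{j^*}$ each such union has density at least $\rho^*$ (this is exactly the mediant inequality the paper uses in Lemma~\ref{lem:fractionalcostsmaller}). The slide step is also sound, because the prefix preceding the first $F^*$-task is predecessor-closed and hence decomposes into subtrees rooted at available tasks, each of density at most $\rho^*$.

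The gap is in the compaction step. Your block-swap identity says a swap weakly decreases cost iff the \emph{later} block is at least as dense as the \emph{earlier} one. You establish the lower bound $\geq \rho^*$ for the $F^*$-suffix (the later block), but you never bound the density of the wedged non-$F^*$ run from above, and no such bound follows from your ingredients: a wedged run can be a single task of weight far exceeding $\rho^*$ (for instance a heavy task $z$ whose weight-zero parent $y$ has already been scheduled, where $F_y=\{y,z\}$ has density below $\rho^*$ even though $\{z\}$ alone is much denser). Restricting to an \emph{optimal} $\sigma$ does not rescue the step; it makes matters worse. Applying your identity in reverse, optimality of $\sigma$ forces every wedged block to be \emph{at least} as dense as the contiguous $F^*$-block that follows it (the swap is always precedence-feasible, since no unprocessed non-$F^*$ task can be an ancestor of an $F^*$-task once $j^*$ is available), so your proposed swap is never strictly improving and is cost-neutral only when the densities are exactly equal. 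Showing that equality must in fact hold---or finding a different feasible rearrangement---is precisely the crux of the theorem, and your outline does not address it. The standard repair exchanges a \emph{single} task rather than a subtree: take the task $j$ of globally maximum weight, observe that every task scheduled between $j$'s parent and $j$ has weight at most $w(j)$, so shifting $j$ to immediately follow its parent cannot increase cost; then merge $j$ into its parent as a composite job and recurse. The composite jobs that accumulate are exactly the trees $F_j$, and Horn's priority rule falls out of the recursion. I would restructure the proof along those lines.
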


The running time of Horn's algorithm is not given in~\cite{horn1972single}.  However, a priority queue implementation allows both the calculation of all task densities, and the decision of when to run each task, in $O(|\calT|\log |\calT|)$ time.

\subsection{Multiple Machines}
While Horn's algorithm is optimal when $P=1$, in general $\scheduling$ is known to be strongly NP-hard~\cite{Timkovsky03,lenstra1980complexity}.  
The best-known approximation algorithm for $\scheduling$ is the $(1 + \sqrt{2})$-approximation algorithm of~\cite{li2020scheduling} (i.e.\ a $2.415$-approximation).

However, there are downsides to using known algorithms as a black box to solve our problem.
The known approximation algorithms for $\scheduling$ are LP-rounding-based algorithms for $P|prec|\sum wC$.  That is to say, they work for general, not just tree, precedence constraints;  for this reason, they are complicated, and do not give structural insight into designing effective algorithms for the tree case specifically.  
Here, we give a $4$-approximation algorithm that is based on Horn's algorithm: we choose what task to process (i.e.\ what to flush) based on the density of available tasks.

To begin, we define an intermediate algorithm.  
We define an algorithm \defn{Parallel Heaviest Tree First (PHTF)}, which generalizes Horn's algorithm to multiple processors.  As before, we calculate the task density of all tasks in $\calT$, and add all tasks without a precedence constraint to a priority queue $PQ$.  At each time step, we remove the $P$ highest-density tasks from $PQ$ (or all tasks from $PQ$ if fewer than $P$ are available) and process all of them; then, we add the children of any processed tasks to $PQ$.
Our goal is to use PHTF as an intermediate algorithm to give a $4$-approximation algorithm for $\scheduling$.  

As a first step towards that goal, we show that PHTF is optimal under a slightly relaxed cost function.  The idea of this cost function is that it gives a fractional notion of completion time.  

We define the \defn{Horn's trees} for a given instance $\calT$.  
In short, the Horn's trees partition all tasks into dense subtrees processed by Horn's algorithm.
We formally define them recursively: for an instance $\calT'$, find a task $j$ with no parent in $\calT'$; then $F_j$ is a Horn's tree.  Remove all tasks in $F_j$ from $\calT'$ and recurse. This continues until $\calT'$ has no remaining tasks.  

Fix an instance of $\scheduling$  and denote it $\calT$; we denote the set of Horn's trees for $\calT$ as $\calH$.
By definition, every task in $\calT$ is in exactly one Horn's tree $T\in \calH$.

We repeatedly use the following observation about Horn's trees in our proof.

\begin{observation}%
\label{obs:horns_trees_only_densest}
If $T$ and $T'$ are two subtrees of $\calT$, and both $T$ and $T'$ have a common root $r$, then if $T'$ is denser than $T$, $T$ cannot be a Horn's tree.
\end{observation}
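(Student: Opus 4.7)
The plan is to unpack directly what it means for $T$ to be a Horn's tree. By the recursive definition, any Horn's tree arises as $F_j$ where $j$ was selected as a root-free task of the current sub-instance $\calT'$ in some iteration. Since $T$ has root $r$, if $T$ is a Horn's tree then necessarily $T = F_r$ for some iteration in which $r$ was a root of $\calT'$. Once this is established, the observation is immediate: by definition $F_r$ is the densest subtree of $\calT$ rooted at $r$, so no subtree $T'$ rooted at $r$ can have strictly higher density.

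The one subtlety is that $F_r$ is defined with respect to the full $\calT$, while the Horn's tree selected for $r$ is naturally $F_r$ computed in the current $\calT'$. So the first step is to argue that these two notions coincide, i.e.\ when $r$ becomes a root of $\calT'$, every descendant of $r$ in $\calT$ is still present in $\calT'$. I would prove this by contradiction: suppose some descendant $d$ of $r$ in $\calT$ was already removed, as part of an earlier Horn's tree $T_k$ with root $r_k$. By contiguity of $T_k$, the root $r_k$ is an ancestor of $d$ in $\calT$; and since $T_k$ does not contain $r$ (because $r$ has not been removed), $r_k$ must be a proper descendant of $r$. But $r_k$ was selected as a root of the then-current sub-instance, which requires all ancestors of $r_k$ in $\calT$ — in particular $r$ — to have already been removed. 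This contradicts $r$ not yet being removed.

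With the equivalence in hand, the observation follows immediately: $T = F_r$ is, by construction, a subtree rooted at $r$ of maximum density among all subtrees of $\calT$ rooted at $r$, so no $T'$ rooted at $r$ can satisfy $w(T')/s(T') > w(T)/s(T)$.

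The main obstacle is just pinning down the contiguity/nesting argument cleanly; everything else reduces to reading off the definition of $F_r$. I would keep this proof quite short in the final writeup, since its role is mainly to license the more interesting uses of the observation later in the analysis of PHTF.
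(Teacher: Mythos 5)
Your proof is correct and takes essentially the same route as the paper's one-line argument: if $T$ is a Horn's tree rooted at $r$ then $T = F_r$, and $F_r$ is by definition the densest subtree of $\calT$ rooted at $r$. The extra nesting argument you supply (that all descendants of $r$ are still present when $r$ is selected, so $F_r$ over the sub-instance equals $F_r$ over $\calT$) is a legitimate detail the paper leaves implicit, and it holds for exactly the reason you give.
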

\begin{proof}
	If $T$ is a Horn's tree, then $T = F_r$ by definition.  But this is not possible since the density of $F_r$ must be at least the density of $T'$.
\end{proof}

For any schedule $\sigma$ for $\calT$, any Horn's tree $T_i \in \calH$, and any time step $t$, 
let 
$U_i^t(\sigma)$ 
be the subset of tasks in $T_i$ 
that are unfinished in $\sigma$ at time step $t$.  
Observe that we can denote the weighted completion time of $\sigma$ as 
\[
	\cost(\sigma) = \sum_t \sum_{T_i\in \calH} \left(\sum_{j\in U_i^t(\sigma)} w(j)\right).
\]
Then we define $\costf(\sigma)$ for any schedule $\sigma$ as 
\[
	\costf{}(\sigma) = \sum_t \sum_{T_i\in \calH} \left(\frac{|U_i^t(\sigma)|}{s(T_i)}\sum_{j\in T_i} w(j)\right)
\]

In other words, in this notion of cost, the algorithm gets credit for the portion of each tree it completes, where the trees are defined using Horn's algorithm.  
We can equivalently write $\costf(\sigma) = \sum_t \sum_{T_i\in \calH} |U_i^t(\sigma)|w(T_i)/s(T_i)$.

\begin{lemma}%
\label{lem:phtf_optimal_fractional}
PHTF is optimal for $\costf$.
\end{lemma}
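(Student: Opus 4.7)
My plan is to first reinterpret $\costf$ as a weighted completion-time objective, then prove a monotonicity property of the resulting weights along $\calT$-precedence, and finally conclude that PHTF is optimal via an exchange argument.

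Swapping the order of summation yields
\[
	\costf(\sigma) = \sum_{T_i\in\calH}\frac{w(T_i)}{s(T_i)}\sum_{j\in T_i}c(\sigma,j) = \sum_j d_{T(j)}\,c(\sigma,j),
\]
where $T(j)$ is the Horn's tree containing $j$ and $d_{T(j)}=w(T(j))/s(T(j))$ is its density. Hence $\costf$ is a $\sum w_jC_j$-style objective under the ``effective weight'' $w'_j:=d_{T(j)}$. The key structural claim is that $w'$ is non-increasing along $\calT$-precedence: for any child $c$ of $j$ in $\calT$, $d_{T(c)}\le d_{T(j)}$. The case $T(c)=T(j)$ is trivial; otherwise, because Horn's trees are contiguous subtrees of $\calT$, $c$ must be the root of $T(c)$ (else the path from $r_{T(c)}$ down to $c$ would force the intermediate vertex $j$ to lie in $T(c)$), so $T(j)\cup T(c)$ is a subtree of $\calT$ rooted at $r_{T(j)}$. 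Observation~\ref{obs:horns_trees_only_densest} bounds its density by $d_{T(j)}$, and a weighted-average calculation immediately yields $d_{T(c)}\le d_{T(j)}$.

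Given monotonicity, I would establish PHTF's optimality by induction on the time step. Suppose an optimal $\sigma^*$ matches PHTF through step $t-1$ and diverges at step $t$, with PHTF selecting $j_1$ and $\sigma^*$ selecting $j_2\ne j_1$ (so $\mathrm{TD}(j_1)\ge \mathrm{TD}(j_2)$). I would transform $\sigma^*$ by placing $j_1$ in slot $t$, moving $j_2$ to $j_1$'s old slot, and shifting forward any of $j_2$'s descendants that $\sigma^*$ had placed in between. Within any Horn's tree every task has the same $w'$, so the cost change of the swap decomposes into contributions from $j_1$, $j_2$, and those displaced descendants; monotonicity of $w'$ guarantees each descendant has effective weight at most $d_{T(j_2)}$, which is itself at most $\mathrm{TD}(j_2)\le\mathrm{TD}(j_1)$, giving the slack needed to ensure the swap does not increase $\costf$. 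The main obstacle is rigorously managing the cascading displacement --- shifting $j_2$'s descendants may itself displace further tasks --- and I expect the cleanest route is either a careful invariant maintained across time steps (bounding how far the cascade of reshuffling can propagate), or a global charging/telescoping argument that uses the precedence-monotonicity of $w'$ to bound the total cost change across all displaced tasks at once.
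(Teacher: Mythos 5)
Your setup is sound and actually makes explicit something the paper leaves implicit: rewriting $\costf(\sigma)=\sum_j d_{T(j)}\,c(\sigma,j)$ is correct (swap the sums over $t$ and $j$), and your monotonicity claim $d_{T(c)}\le d_{T(j)}$ for a child $c$ of $j$ is also correct and provable exactly as you sketch via Observation~\ref{obs:horns_trees_only_densest} applied to $T(j)\cup T(c)$ plus a mediant inequality. But the proof is not finished: the exchange argument is the entire content of the lemma, and you explicitly leave its core difficulty --- the cascading displacement of $j_2$'s descendants --- unresolved, offering only two candidate strategies without carrying either out. As stated, the swap you describe is not even obviously feasible (moving $j_2$ later may violate the precedence constraints of descendants of $j_2$ already scheduled in between, and re-shifting those may displace further tasks whose effective weights you have not controlled relative to $j_1$), so the inequality ``the swap does not increase $\costf$'' is not established. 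One further caution: PHTF prioritizes by task density $\mathrm{TD}(j)=w(F_j)/s(F_j)$, which is generally \emph{not} equal to the effective weight $w'_j=d_{T(j)}$ for non-root tasks of a Horn's tree, so an exchange argument phrased purely in terms of $w'$ must still explain why scheduling by $\mathrm{TD}$ is the right greedy rule; your chain of inequalities $w'_{j''}\le d_{T(j_2)}\le \mathrm{TD}(j_2)\le \mathrm{TD}(j_1)$ is true but does not by itself close this loop, since what you need to compare against is $w'_{j_1}$, not $\mathrm{TD}(j_1)$.

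The paper's proof sidesteps the cascade entirely by choosing \emph{which} two slots to transpose: it takes the first divergence time $t$, lets $t_\ell$ be the slot where $\sigma$ actually runs the PHTF-preferred task $j_i$, and lets $t'$ be the \emph{last} slot before $t_\ell$ at which $\sigma$ runs a task $j'$ from a tree strictly less dense than $T_i$. With this choice, every task scheduled strictly between $t'$ and $t_\ell$ comes from a tree at least as dense as $T_i$, and Observation~\ref{obs:horns_trees_only_densest} then rules out any such task being a descendant of $j'$ (otherwise $T'$ augmented with that denser tree would contradict $T'$ being a Horn's tree). Hence a single transposition of $j_i$ and $j'$ is precedence-feasible with no reshuffling at all, and the cost change is $(t_\ell-t')(\delta_i-\delta')>0$. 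If you want to complete your version, the cleanest fix is to import exactly this slot-selection idea rather than to fight the cascade; your monotonicity lemma is a nice reformulation but is not a substitute for it.
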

\begin{proof}
	Assume by contradiction that the optimal schedule $\sigma$ under $\costf$ satisfies $\costf(\sigma) < \costf(\sigma_H)$.  

	Since $\sigma \neq \sigma_H$, there must be some first time step $t$ where there is a task $j_i$ from the Horn's tree $T_i$ that is one of the $P$ densest trees in the priority queue, but $\sigma$ instead schedules a task from a tree that is strictly less dense than $T_i$.  Let $t_{\ell} > t$ be the time step when $j_i$ is scheduled in $\sigma$.  

	Let $t'$ be the \emph{last} time step before $t_{\ell}$ such that a task $j'$ is processed at $t'$ where $j'$ is from a Horn's tree $T'$ that is strictly less dense than $T_i$. There must be such a $t' \geq t$ because $t' = t$ always satisfies the requirement.

	We obtain a new schedule $\sigma'$ by swapping $j_i$ and $j'$.  To achieve a contradiction, we must prove that $\sigma'$ is a valid schedule (the tasks it performs satisfy the precedence constraints), and that $\costf(\sigma')  < \costf(\sigma)$.

	First, we show that $\sigma'$ is valid.  By definition, any precedence constraint for $j_i$ is satisfied at $t$, and thus is satisfied at $t'$.  Similarly, if any precedence constraint is satisfied for $j'$ at $t'$, it satisfied at $t_\ell > t$.  
	Further, all tasks processed at time steps later than $t_\ell$ or earlier than $t'$ obviously remain valid.
	Now, we must show that all tasks other than $j'$ and $j_i$ executed between $t'$ and $t_\ell$  in $\sigma'$ are valid.
	Consider a task $j''$ processed between $t'$ and $t_\ell$ such that $j''$ is a descendant of $j'$.  Since $j''$ is processed between $t'$ and $t_\ell$, by definition, $j''$ must be from some Horn's tree $T''$ that is more dense than $T_i$, and thus more dense than $T'$.  However, this is not possible by Observation~\ref{obs:horns_trees_only_densest}, because then $T' \cup T''$ would be denser than $T'$ and share a common root.

	Finally, we must show that $\costf(\sigma') < \costf(\sigma)$.  
Let $\delta_i$ and $\delta'$ be the density of $T_i$ and $T'$ respectively.
	We have that $\costf(\sigma) - \costf(\sigma') = (t_\ell - t)(\delta_i - \delta').$  But this is positive since $t_\ell > t$ and $\delta_i > \delta'$.
	Thus, $\sigma'$ has lower cost than $\sigma$, giving a contradiction.
\end{proof}

It's not immediately clear how to compare $\cost{}$ and $\costf{}$.  On the one hand, for a tree $T$ that is a path of tasks of increasing weight, $\costf{}$ always gives a smaller cost than $\cost{}$.  On the other hand, for a non-path $T$, completing a particularly heavy node immediately decreases $\cost{}$ by a large amount, while $\costf{}$ only reduces proportionally to how much of $T$ has been flushed---it is easy to come up with trees (that are not Horn's trees) where $\cost$ is far less than $\costf{}$.
Perhaps surprisingly, we show in the next lemma that despite this tradeoff, $\costf{}$ is always smaller.  The main proof idea is to use the definition of the Horn's trees: 
if completing a portion of $T_i\in \calH$ has smaller cost under $\cost{}$ than under $\costf{}$, then $T_i$ can be split into two trees $T'$ and $T''$ where $T'$ shares a root with $T_i$, but is denser than $T_i$, violating Observation~\ref{obs:horns_trees_only_densest}.

\begin{lemma}%
\label{lem:fractionalcostsmaller}
For any schedule $\sigma$, $\costf(\sigma) \leq \cost(\sigma)$.  
\end{lemma}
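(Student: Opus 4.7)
The plan is to show the inequality term by term: for each time step $t$ and each Horn's tree $T_i \in \calH$, I will prove
\[
\frac{|U_i^t(\sigma)|}{s(T_i)}\, w(T_i) \;\leq\; \sum_{j \in U_i^t(\sigma)} w(j).
\]
Summing over $t$ and $T_i$ then gives $\costf(\sigma) \leq \cost(\sigma)$ directly from the two formulas the paper uses to express these costs. Writing $U := U_i^t(\sigma)$, this per-term inequality is exactly the density statement $w(U)/|U| \geq w(T_i)/s(T_i)$: the unfinished portion of $T_i$ must be at least as dense as $T_i$ itself.

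To establish the density inequality, I will use the outtree precedence structure. Let $R_i := T_i \setminus U$ be the already-completed tasks within $T_i$ at time $t$. If $R_i = \emptyset$ or $U = \emptyset$ the inequality is trivial, so assume both are nonempty. Because no task can be processed before its parent, every ancestor inside $T_i$ of a completed task is also completed. Hence $R_i$ meets the paper's definition of a subtree, and it is in fact rooted at the root $r_i$ of $T_i$, making $R_i$ a subtree of $\calT$ that shares its root with $T_i$.

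The key step is to invoke Observation~\ref{obs:horns_trees_only_densest}. Since $T_i$ is a Horn's tree rooted at $r_i$ and $R_i$ is another subtree of $\calT$ with the same root, $R_i$ cannot be strictly denser than $T_i$; that is, $w(R_i)/s(R_i) \leq w(T_i)/s(T_i)$. Because $T_i$ is the disjoint union of $R_i$ and $U$, the quantity $w(T_i)/s(T_i)$ is a size-weighted average of the densities of $R_i$ and $U$, and a weighted average that is at least the density of $R_i$ forces the density of $U$ to be at least the density of $T_i$. Rearranging yields exactly the term-wise inequality I needed.

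I expect the only real subtlety to be recognizing that the seemingly global cost comparison reduces to an independent local density comparison on each $T_i$ at each $t$, and that the completed portion of $T_i$ is automatically anchored at $r_i$ by the outtree precedence. Once both of these observations are in place, Observation~\ref{obs:horns_trees_only_densest} does the remaining work, and no argument linking different trees or different time steps is required.
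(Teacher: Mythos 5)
Your proof is correct and follows essentially the same route as the paper's: the same term-by-term reduction to a per-tree density inequality, the same observation that the completed portion of $T_i$ is a subtree sharing $T_i$'s root (hence Observation~\ref{obs:horns_trees_only_densest} applies), and then elementary arithmetic. The only cosmetic difference is that you phrase the final step as a weighted-average argument while the paper uses an equivalent mediant-style inequality; your explicit handling of the empty-set edge cases is a small bonus.
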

\begin{proof}
Consider some Horn's tree $T_i$ and some time step $t$ for the remainder of the proof.
Let $T^f_i$ be the tasks in $T_i$ that are completed by $t$, and $T^u_i$ be the unfinished tasks in $T_i$.  
Then it is sufficient to show that for any $T_i$ and $t$, for all $\sigma$, 
\begin{equation}
	\label{eq:lem_fractional_smaller_rewritten}
	w(T_i) \frac{s(T^u_i)}{s(T_i)} \leq w(T^u_i).
\end{equation}

Since $\sigma$ is a schedule, $T^f_i$ is a subtree of $T_i$ with the same root as $T_i$.  By Observation~\ref{obs:horns_trees_only_densest}, $w(T^f_i)/s(T^f_i) \leq w(T_i)/s(T_i)$.  
We must have $s(T^f_i) + s(T^u_i) = s(T_i)$ and $w(T^f_i) + w(T^u_i) = w(T_i)$, so we can write
\[
	\frac{w(T_i) - w(T^u_i)}{s(T_i) - s(T^u_i)} \leq \frac{w(T_i)}{s(T_i)}.
\]

For any $x,y,a,b > 0$, 
if 
$(x - a)/(y-b) \leq x/y$ then 
$x/y \leq a/b$.  Thus, we have $w(T_i)/s(T_i) \leq w(T^u_i)/s(T^u_i)$.  Rearranging we obtain Equation~\ref{eq:lem_fractional_smaller_rewritten}.
\end{proof}

\paragraph{A $4$-Approximation Algorithm}
We are now ready to define our algorithm, Modified Parallel Heaviest Tree First (MPHTF).  At each time step $t$, let $\calT_t$ be the set of at most $P$ tasks flushed by PHTF.  
For every task $j\in \calT_t$, let $T_j$ be the Horn's tree containing $j$.  Then MPHTF performs a task from $T_j$ (with satisfied precedence constraints) at time step $t$, and performs a task (with satisfied precedence constraints) from $T_j$ at time step $t+1$.  If all tasks in $T_j$ are completed MPHTF does nothing.

\begin{lemma}%
\label{lem:constant_approximation}
If $\sigma_M$ is the solution given by MPHTF on an instance $\calT$ of $\scheduling$, then for any other schedule $\sigma'$ for $\calT$, we have $\cost(\sigma_M) \leq 4\cost(\sigma')$.  
\end{lemma}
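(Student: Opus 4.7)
My plan is to derive the $4$-approximation by chaining three inequalities:
\[
\cost(\sigma_M) \;\leq\; 4\,\costf(\sigma_{PHTF}) \;\leq\; 4\,\costf(\sigma') \;\leq\; 4\,\cost(\sigma'),
\]
where the middle inequality is Lemma~\ref{lem:phtf_optimal_fractional} (PHTF is optimal under $\costf$) and the last is Lemma~\ref{lem:fractionalcostsmaller}. The only new work is the first inequality, which I handle Horn's-tree by Horn's-tree.

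Fix a Horn's tree $T_i\in\calH$ of size $k = s(T_i)$, and let $t_{i,1}\leq\cdots\leq t_{i,k}$ be the time steps at which PHTF flushes tasks from $T_i$. Then the contribution of $T_i$ to $\costf(\sigma_{PHTF})$ is exactly $(w(T_i)/k)\sum_\ell t_{i,\ell}$. By the construction of MPHTF, each time PHTF picks a task from $T_i$ at some step $t_{i,m}$, MPHTF schedules two tasks of $T_i$ (one at $t_{i,m}$ and one at $t_{i,m}+1$), processed in the order Horn's algorithm would choose when run on $T_i$ in isolation. Inductively, after $m$ such PHTF picks MPHTF has completed $\min(2m,k)$ tasks of $T_i$, so the $\ell$-th task of $T_i$ in that order finishes in MPHTF by step at most $t_{i,\lceil \ell/2\rceil}+1\leq 2\,t_{i,\lceil \ell/2\rceil}$. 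Writing $w_\ell$ for its weight, MPHTF's contribution from $T_i$ is at most
\[
\sum_{\ell=1}^k w_\ell\cdot 2\,t_{i,\lceil \ell/2\rceil} \;\leq\; 2\sum_{m=1}^{\lceil k/2\rceil} (w_{2m-1}+w_{2m})\, t_{i,m}.
\]

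The main obstacle is to bound this quantity against $4(w(T_i)/k)\sum_\ell t_{i,\ell}$, which is where the Horn's-tree structure of $T_i$ is essential. By Observation~\ref{obs:horns_trees_only_densest}, every contiguous subtree of $T_i$ sharing its root has density at most $w(T_i)/k$, while Horn's-on-$T_i$ processes tasks in non-increasing task-density order subject to precedence. Together these constrain how cumulative $w_\ell$'s can grow along Horn's-on-$T_i$ order: they cannot be too back-loaded relative to the average density $w(T_i)/k$ (otherwise some suffix rooted at the root would be denser than $T_i$, contradicting the Horn's-tree property). Combined with the monotonicity of $t_{i,m}$ in $m$, a standard exchange or Abel-summation argument in the spirit of the proof of Lemma~\ref{lem:phtf_optimal_fractional} then delivers the per-tree factor of $4$; summing over all Horn's trees completes the first link of the chain and hence the lemma. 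A useful sanity check along the way is the case where all of $T_i$'s weight sits on a single leaf (the case arising from the WORMS reduction), where the pairing-plus-monotonicity calculation is essentially immediate and already yields the constant $4$.
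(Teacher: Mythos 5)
Your overall skeleton is exactly the paper's: chain $\cost(\sigma_M)\le 4\costf(\sigma_H)\le 4\costf(\sigma')\le 4\cost(\sigma')$ via Lemmas~\ref{lem:phtf_optimal_fractional} and~\ref{lem:fractionalcostsmaller}, and prove the first link tree-by-tree using the fact that MPHTF completes two tasks of $T_i$ per PHTF pick from $T_i$. The setup through the displayed bound $\sum_\ell w_\ell\cdot 2t_{i,\lceil\ell/2\rceil}\le 2\sum_m(w_{2m-1}+w_{2m})t_{i,m}$ is fine. The gap is that the one substantive inequality --- $\sum_m (w_{2m-1}+w_{2m})\,t_{i,m}\le (2w(T_i)/k)\sum_{\ell=1}^k t_{i,\ell}$ --- is asserted rather than proved, and the mechanism you offer for it is not right. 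The prefix-density bound from Observation~\ref{obs:horns_trees_only_densest} says that cumulative weight along Horn's order satisfies $\sum_{j\le\ell}w_j\le \ell\,w(T_i)/k$; that is, the weights of a Horn's tree \emph{are} allowed to be maximally back-loaded (e.g.\ all of $w(T_i)$ on the last task, which is precisely the case produced by the WORMS reduction). The ``suffixes'' you appeal to are not subtrees rooted at the root, so back-loading yields no contradiction with Observation~\ref{obs:horns_trees_only_densest}. The property you invoke neither holds in the direction you need nor is it what makes the inequality true.

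Fortunately the inequality is true for a more elementary reason, and proving it collapses your refined accounting into the paper's argument. Write $W=w(T_i)$, $M=\lceil k/2\rceil$, and $V_m=w_{2m-1}+w_{2m}$. Since $t_{i,m}\le t_{i,M}$ for $m\le M$ and $\sum_m V_m=W$, we get $\sum_m V_m t_{i,m}\le W\,t_{i,M}$; and since the $t_{i,\ell}$ are nondecreasing, $\sum_{\ell=1}^k t_{i,\ell}\ge (k-M+1)\,t_{i,M}\ge (k/2)\,t_{i,M}$, so $(2W/k)\sum_{\ell} t_{i,\ell}\ge W\,t_{i,M}$. This is exactly the paper's proof in disguise: $t_{i,M}$ is the time $t_h(i)$ at which PHTF has half-finished $T_i$, $\costf$ charges at least $w(T_i)/2$ per time step up to $t_h(i)$, and MPHTF finishes all of $T_i$ by $2t_h(i)$ --- the two factors of $2$ give the $4$. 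Note that the density structure of Horn's trees plays no role in this step (it is already consumed by Lemmas~\ref{lem:phtf_optimal_fractional} and~\ref{lem:fractionalcostsmaller}); only the monotonicity of the $t_{i,m}$ and the definition of $\costf$ are needed. With that two-line replacement for your ``exchange or Abel-summation argument,'' your proof is correct and coincides with the paper's.
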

\begin{proof}
	Let $\sigma_H$ be the schedule given by PHTF on $\calT$.  By Lemma~\ref{lem:hornsoptimal} we have that $\costf(\sigma_H)\leq \costf(\sigma')$.  Furthermore, from Lemma~\ref{lem:fractionalcostsmaller}, $\costf(\sigma') \leq \cost(\sigma')$.  Therefore, we need only to show $\cost(\sigma_M) \leq 4\costf(\sigma_H)$ to prove the lemma.

At time step $t$, let $\calH_t$ be the set of Horn's trees which have had at most half their edges flushed by PHTF.  Then
\begin{align*}
	\costf{}(\sigma_H) &= \sum_t \sum_{T_i\in \calH} \left(\frac{|U_i^t(\sigma_H)|}{s(T_i)}w(T_i)\right)\\
					   &= \sum_t \sum_{T_i\in \calH_t} \left(\frac{|U_i^t(\sigma_H)|}{s(T_i)}w(T_i)\right)  \\ 
					   & \qquad + \sum_t \sum_{T_i\in \calH\setminus \calH_t} \left(\frac{|U_i^t(\sigma_H)|}{s(T_i)}w(T_i)\right) \\
					   &\geq \sum_t \sum_{T_i\in \calH_t} \left(\frac{1}{2}w(T_i)\right) + \sum_t \sum_{T_i\in \calH\setminus \calH_t} 0
\end{align*}
Let $t_h(i)$ be the first time step when at least half the tasks in Horn's tree $T_i$ completed in PHTF.  Replacing in the above, $\costf(\sigma_H) \geq \sum_{T_i\in \calH_t} t_h(i) w(T_i)/2$.

Since MPHTF flushes any Horn's tree twice when PHTF flushes it once, for any time step $t'$, all tasks in Horn's trees $T_i \in \calH \setminus \calH_t$ are finished by $2t'$ in $\sigma_M$.  Therefore, 
\begin{align*}
	\cost(\sigma_M) & = \sum_t \sum_{T_i \in \calH} \sum_{j\in U^t_i(\sigma)} w(j)\\
	                & = \sum_{T_i \in \calH} \sum_t \sum_{j\in U^t_i(\sigma)} w(j)\\
					& = \sum_{T_i \in \calH} \sum_{t=1}^{2t_h(i)} \sum_{j\in U^t_i(\sigma)} w(j)\\
					&\leq 2\sum_{T_i \in H_{t'}} 2t_h(i) w(T_i).
\end{align*}
Putting the above together, $\cost(\sigma_M)\leq 4\costf(\sigma_H)$.
\end{proof}

We immediately obtain that MPHTF is a $4$-approximation algorithm.

\subsection{Putting it All Together}%
\label{sec:approximating_wots}

We are ready to show how to give, for any WORMS instance $(T,M, P, B)$, a sequence of flushes that $O(1)$-approximates the optimal sequence of flushes.

To begin, we use $(T,M, P, B)$ to calculate $\calT(T,M, P, B)$.  By Lemma~\ref{lem:WORMS_to_scheduling} the best solution to $\calT(T,M, P, B)$ is at most $c_1$ greater than the best solution to $(T,M, P, B)$.  We use MPHTF to $4$-approximate the best solution to $\calT(T,M, P, B)$; call this solution $\sigma_M$.  We then use Lemma~\ref{lem:scheduling_to_WORMS} to convert $\sigma_M$ to an overfilling sequence of flushes $S$, where $S$ has the same cost as $\sigma_M$; at most $4c_1$ of optimal.  Finally, we use Lemma~\ref{lem:overfilling_to_valid} to convert $S$ to $\hat{S}$, a valid sequence of flushes for $(T,M, P, B)$, where $c(\hat{S})$ is at most $4c_1^2$ the optimal cost for $(T,M, P, B)$.  Thus we achieve a schedule $\hat{S}$ that $O(1)$-approximates the optimal schedule for $(T,M, P, B)$.  Each of these operations requires $O(n\log n)$ time, so $\hat{S}$ can be obtained in $O(n\log n)$ total time.

\subsection{NP-hardness}%
\label{sec:np_hardness}

We show that WORMS is NP hard even on a single processor.  This justifies that our results give an approximation of the optimal solution.  The proof has been moved to the appendix for space.

\begin{lemma}%
\label{lem:NP_hard}
WORMS is NP hard even if $P = 1$.
\end{lemma}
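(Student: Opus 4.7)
The goal is to show WORMS is NP-hard even when $P=1$. Since Horn's algorithm solves $1|outtree, p_j=1|\sum wC$ optimally in polynomial time, the hardness cannot come from the scheduling side alone; it must arise from WORMS's additional freedom to pack any subset of up to $B$ messages into each flush. This packing flexibility strongly suggests a reduction from a partitioning-style problem.

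The plan is to reduce from 3-PARTITION, which is strongly NP-hard. Given integers $a_1,\ldots,a_{3n}$ with each $a_j\in(B/4,B/2)$ and $\sum_j a_j = nB$, construct a WORMS instance with $P=1$ and the same $B$, whose tree consists of the root $r$, a single internal child $v$, and $3n$ leaves $\ell_1,\ldots,\ell_{3n}$ under $v$; for each $j$ add $a_j$ messages with target $\ell_j$. Two structural facts drive the reduction. First, since each $a_j\in(B/4,B/2)$, every $r\to v$ flush contains at most $3$ leaf-groups (a fourth would exceed the capacity $B$), so delivering all $nB$ messages to $v$ requires at least $n$ flushes, with equality if and only if the 3-PARTITION instance is feasible (each flush must then carry a triple summing exactly to $B$). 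Second, since $a_j<B$, each leaf $\ell_j$ needs exactly one dedicated $v\to\ell_j$ flush.

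In a YES instance, an optimal schedule alternates one parent flush with its three leaf flushes, achieving some sum of completion times $X$. In a NO instance, at least $n+1$ root-level flushes are required, and since $P=1$ every flush consumes its own time step, the schedule length strictly grows and at least one message's delivery is pushed later, making the optimal sum of completion times strictly exceed $X$. Setting the decision threshold to $X$ then separates the two cases.

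The main obstacle is pinning down a usable threshold $X$ that is computable in polynomial time without knowing the partition, because the YES-case optimum depends on which triple partition is chosen and on the ordering of leaves within each triple (optimally by Smith's rule applied to the weights $a_j$). A natural remedy is to augment the instance with uniform ``padding'' messages so that the partition-dependent portion of the completion-time cost becomes small compared to the dominant cost of the parent flushes, and then set $X$ as an upper bound on the YES-case optimum that remains strictly below the NO-case lower bound forced by the mandatory extra root-level flush. The delicate part is ruling out clever NO-case schedules that interleave parent and leaf flushes in unusual orders to try to hide the cost of the extra flush; the single-processor bottleneck is the key lever here, since each extra flush must consume a full time step and therefore provably delays at least one positive-weight message.
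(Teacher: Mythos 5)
Your outline follows the same route as the paper: a reduction from (strongly NP-hard) 3-partition, the same depth-two gadget with a root, one internal node, and one leaf per integer, and the same remedy of padding each group with uniform messages so that the partition-dependent part of the completion-time cost becomes negligible (the paper sets $B = 3X+K$ with $X = 12n'^2K$ and gives leaf $\ell_i$ exactly $X+i$ messages). However, two steps you rely on do not hold as stated, and the paper needs additional machinery precisely to repair them.

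First, your structural claim that delivering all $nB$ messages to $v$ takes $n$ root-level flushes \emph{if and only if} the 3-partition instance is feasible is false in the ``only if'' direction: a flush may carry an arbitrary subset of up to $B$ messages, so $n$ flushes of exactly $B$ messages each are always achievable by splitting a leaf-group across two consecutive flushes. Hence a NO instance does \emph{not} force $n+1$ root-level flushes, and your NO-case lower bound collapses. The real obstruction is that splitting a group delays that group's single $v\to\ell_j$ flush by at least one round, postponing $\Theta(X)$ padded messages; the paper formalizes this via ``canonical'' schedules and a lower bound $C_0$ on the relaxed instance where every group has exactly $X$ messages. Second, that lower-bound argument requires knowing that each leaf is flushed to only once, i.e.\ a bound of $4n'$ on the total number of flushes (equivalently the makespan). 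The sum-of-completion-times objective alone does not supply such a bound: a schedule could spend extra flushes to finish part of a split group early. The paper therefore proves equivalence with the \emph{conjunction} ``makespan $\le 4n'$ and cost $\le C_1$,'' and then adds a second gadget --- $8n'|M_1|+C_1$ extra depth-two leaves carrying one message each --- so that, by an exchange argument, all of these messages wait for $M_1$ and any schedule exceeding $4n'$ steps on $M_1$ overshoots the single threshold $C_2$. Your proposal has no analogue of this second gadget, so even with padding the claimed separation between the YES and NO optima is not established.
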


\begin{proof}
	We reduce from $3$-partition.  In $3$-partition, we are given a set of $3n'$ integers $I$, where $\sum_{i\in I} i = n'K$ for some $K$, and all $i\in I$ are between $K/4$ and $K/2$.  $3$-partition remains strongly NP-hard even if all $i\in I$ are between $K/4$ and $K/2$ and the integers in $I$ are given in unary.\footnote{This is important because our instance will have $|M|$ polynomial in $\sum_{i\in I} i$.}  The problem is to decide whether there exists a partition of $I$ into $n'$ sets $\calP$, such that each set $\hat{P}\in \calP$ has $\sum_{i\in \hat{P}} i = K$.  (Since all $i$ are between $K/4$ and $K/2$, there must be exactly $3$ items in every set $\hat{P}\in \calP$.)  

	We proceed in two parts.  First, we give an instance of WORMS $(T_1, M_1, 1, B)$ such that $I$ has a $3$-partition if and only if there is a schedule $S_1$ such that: (1) the \emph{maximum} completion time of any message in $(T_1, M_1, 1, B)$ in $S_1$ is at most $4n'$ and (2) the cost of $S_1$ on $(T_1, M_1, 1, B)$ is at most $C_1$.   Then, we give a WORMS instance $(T,M, 1, B)$ that has a schedule of cost at most $C_2$ if and only if $(T_1, M_1, 1, B)$ has such a schedule $S_1$.

First, let us define $T_1$, $M_1$, $B$ and $C_1$.  Let $B=3X+K$ where\footnote{$X$ is a large number added to each integer in $I$---intuitively, adding a large number to each integer means that the integers are proportionately much closer in value; thus, small perturbations in a schedule in our reduction have a minimal impact on cost.} $X=12n'^2K$. To begin, let $T_1$ consist of a root node $r$ and a child $x$ of $r$.  Then, create $3n'$ leaves, all of which are children of $x$: in particular, for each $i\in I$, create a leaf $\ell_i$ in $T_1$, and create $X+i$ messages in $M_1$ that have $\ell_i$ as their target leaf. 
	We call these $X+i$ messages the \defn{representative messages} of $i$.   
	Note that therefore $|M_1| = n'K+3n'X$. Finally, we define: 
\[
	C_1=\sum_{i=1}^{n'} \left(4(i-1)(3X + K) + X(2 + 3 + 4) + 4K\right).
\]

	We show that there is a schedule $S_1$ for $(T_1, M_1, 1, B)$ with maximum completion time $4n'$ and cost at most $C_1$ if and only if $I$ has a $3$-partition.

	If $I$ has a $3$-partition $\calP$, then for each $\hat{P}\in \calP$, we can flush all representative messages of any $i \in \hat{P}$ to $x$; then, for the three $i\in \hat{P}$, we flush the representative messages of each to their leaf.   
	The messages fit in $x$ because $\sum_{i\in \hat{P}} i = K$, so $3X + K$ messages are flushed.
	This requires $4$ flushes; since $|\calP| = n'$ and $P=1$, the maximum completion time of any message is $4n'$.  The sum of completion times of such a schedule is smaller than the one of a virtual schedule that would finish sequentially 0, $X$, $X$, and $X+K$ messages, repeatedly every 4 time steps, which corresponds to the value of $C_1$.  (The $i$th set of messages flushed must wait $4(i-1)$ time steps for the previous sets to finish, after which $X$ messages are completed at time steps $2$, $3$, and $4$, and $K$ messages are completed at time step $4$.)

	On the other hand, assume $(T_1,M_1, 1, B)$ has a schedule $S_1$ that requires at most $4n'$ flushes and a sum of completion times at most $C_1$. 
	Then $S_1$ must have at most $n'$ flushes to $x$ (since each of the $3n'$ leaves must also be flushed to at least once). 
	Call a schedule for $(T_1, M', 1, B)$ \defn{canonical} if it has at most $4n'$ total flushes, and every flush from $r$ to $x$ contains exactly the representative messages of three leaves of $T_1$.  Thus, if $S_1$ is canonical, then the flushes from $r$ to $x$ constitute a $3$-partition.  We now show that $S_1$ is canonical.

	Consider a simpler instance with a set of messages $M' \subset M_1$, where each leaf in $T_1$ has exactly $X$ messages in $M'$ (thus, $M'$ corresponds to an instance where all $i\in I$ have $i = 0$).  We can see that by ignoring messages in $M_1\setminus M'$, ${S}$ is a valid schedule for $(T_1, M', 1, B)$.

	Let's lower bound the cost of any schedule for $(T_1, M', 1, B)$ that requires at most $4n'$ total flushes.  
	Note that since $t$ flushes are required to bring the contents of $3t$ leaves to $x$, no schedule can complete more than $3t$ leaves by time step $4t$ for any $t \geq 0$.  
	Thus, any schedule for $(T_1, M', 1, B)$ has cost
	\begin{align*}
		C_0 &=\sum_{i=1}^{n'} 4(i-1)(3X) + X(2 + 3 + 4) \\
	&= C_1 - \sum_{i=1}^{n'} 4Ki > C_1 - X.
\end{align*}
Clearly, $S_1$ on $(T_1, M', 1, B)$ has cost $C_0$ if and only if it is canonical.  If $S_1$ has cost more than $C_0$ then all messages in some leaf are delayed past their finishing time in a canonical schedule%
\footnote{Recall that since there are $4n'$ flushes each leaf can be flushed to once---if one message is delayed, all messages in that leaf must be delayed as well.} 
in which case the total completion time of $S_1$ on $M'$ (and thus, the total completion time of $S_1$ on $M$) 
is $\geq C_0 + X > C_1$, which contradicts the definition of $S_1$.  Thus, $S_1$ is canonical, and the messages in each flush from $r$ to $x$ constitute a $3$-partition.

Thus, $(T_1, M_1, 1, B)$ has a valid schedule $S_1$ with total completion time $C_1$ and $4n'$ total flushes if and only if $I$ has a $3$-partition.
To finish the reduction, we alter $T_1$ and $M_1$ to obtain a $T$ and $M$ so that $(T, M, 1, B)$ has a valid schedule with total completion time $C_2$ if and only if $I$ has a $3$-partition.

To create $T$ and $M$, add another $8n'|M_1|+C_1$ leaves to $T_1$ to obtain $T$; these leaves all have a unique parent (thus we add many paths of length $2$ to $T_1$).  For each leaf $\ell$ created, add a message with $\ell$ as its target leaf.  Let $M_2$ be all the messages created in this way, and $M = M_1\cup M_2$.  See Figure~\ref{fig:nphard}.

\begin{figure}
\begin{center}
\begin{forest}
  for tree={%
    inner sep=0pt,
    minimum size=10pt,
    circle,
    draw
  },
  [,label=left:$r$
  	[,label=left:$x$,s sep=6ex
		[,label=below:$X + i_1$]
		[,label=below:$X + i_2$]
		[$\ldots$,no edge,draw=none]
		[,label=below:$X + i_{3n'}$]
	]
	[,no edge,phantom [,no edge,phantom]]
	[,edge=dashed [,label=below:$1$,edge=dashed] ]
	[,edge=dashed [,label=below:$1$,edge=dashed] ]
	[,phantom,no edge [,label=above:$\ldots$,no edge,draw=none] ]
	[,edge=dashed [,label=below:$1$,edge=dashed] ]
  ]
\end{forest}
\end{center}%
\caption{A diagram of $T$ in Lemma~\ref{lem:NP_hard}.  Edges in $T_2$ are given as dotted lines; edges in $T_1$ are solid.  The number of all messages in $M$ with a given target leaf is given below each leaf.}%
\label{fig:nphard}%
\end{figure}
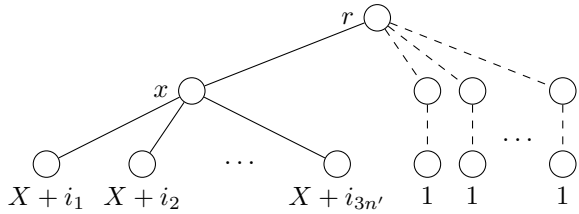

We define 
\[
	C_2 = C_1 + 4n'|M_2| + \sum_{i = 1}^{|M_2|} 2i.
\]

If all messages in $(T_1, M_1, 1, B)$ can be finished by $4n'$ and with a sum of completion time at most $C_1$ by some schedule $S_1$, then we can achieve a schedule that finishes with cost $C_2$: first, schedule all messages in $M_1$ using $S_1$, then flush all messages in $M_2$ one at a time.  The weighted completion time of all messages in $M_1$ is $C_1$; the messages in $M_2$ must wait at most $4n'$ time steps for all messages in $M_1$ to finish, after which they finish one at a time, each requiring $2$ flushes.

Now, assume that all messages in $(T_1, M_1, 1, B)$ can finish by time $4n'+1$ at the earliest.  Let $S$ be the optimal schedule for $(T, M, 1, B)$.  By an exchange argument, all messages in $M_1$ finish before all messages in $M_2$ in $S$.  Thus, all messages in $M_2$ must wait for all messages in $M_1$ to complete, after which they can be finished one at a time.  Thus, the cost for messages in $M_2$ in $S$ is at least 
\[
	(4n'+1)|M_2| + \sum_{i = 1}^{|M_2|} 2i > C_2
\]
since $|M_2| > 4n'|M_1| + C_1$.

Now, assume that all messages in $(T_1, M_1, 1, B)$ cannot finish with a sum of completion time at most $C_1$. We have seen that the maximum completion time of these messages is at most $4n'$. Let $S$ be the optimal schedule for $(T, M, 1, B)$. Again, by an exchange argument, all messages in $M_1$ finish before all messages in $M_2$ in $S$. Thus, the cost for messages in $M_2$ in $S$ is at least $C_2+1$.  

Thus, $(T, M, 1, B)$ has a schedule with cost $C_2$ if and only if $I$ has a $3$-partition.

\end{proof}

\section{Conclusion}%
\label{sec:conclusion}
This paper formalizes WORMS, a new model of root-to-leaf flushing in \wods.
WORMS captures a new notion of latency in write-optimized trees: 
some types of operations complete only after they have flushed through their entire root-to-leaf path in the tree.
The goal is to complete these queries in the fewest cache misses possible,
thus minimizing the query latency.

We give a $O(1)$-approximation algorithm for WORMS via a reduction to a closely-related scheduling problem, and we show that solving WORMS exactly is NP-hard.

The natural next step is to generalize to the much more challenging online and non-clairvoyant setting: messages arrive online, and the path each message takes through the tree is not known to the scheduler.  
Our hope is that the structure of the algorithms presented here can help inform strategies for this more general problem.

\section{Acknowledgments}

This work was supported in part by NSF grants CCF 210381 and CNS 1938709.

\bibliographystyle{plain}
\bibliography{upserts}

\end{document}